\theoremstyle{plain}
\newtheorem{theorem}{Theorem}
\newtheorem{lemma}[theorem]{Lemma}
\newtheorem{definition}[theorem]{Definition}
\newtheorem{fact}[theorem]{Fact}
\newtheoremstyle{note}{\topsep}{\topsep}{\slshape}{}{\scshape}{}{ }{}
\theoremstyle{note}
\newtheorem{remark}[theorem]{Remark}
\newcommand{\mbV}{\mathbb{V}}
\newcommand{\mbI}{\mathbb{I}}
\newcommand\cH{{\mathcal H}}
\newcommand\field{\mathbb}
\newcommand\C{\field{C}}
\renewcommand\Re{\operatorname{Re}}
\renewcommand\Im{\operatorname{Im}}
\newcommand\tr{\operatorname{Tr}}
\newcommand\rmi{\mathrm{i}\mspace{1mu}}
\newcommand{\<}{\langle}
\renewcommand{\>}{\rangle}
\newcommand\be{\begin{equation}}
\newcommand\ee{\end{equation}}
\newcommand\bea{\begin{array}}
\newcommand\eea{\end{array}}
\newcommand\ben{\begin{eqnarray}}
\newcommand\een{\end{eqnarray}}
\newcommand\ot{\otimes}
\newcommand\bei{\begin{itemize}}
\newcommand\eei{\end{itemize}}
\newcommand\bee{\begin{enumerate}}
\newcommand\eee{\end{enumerate}}
\begin{document}
\title{Region of fidelities for a $1\rightarrow N$ universal qubit quantum cloner}

\author{Piotr \'Cwikli\'nski$^{1,3,5}$,
Micha{\l} Horodecki$^{2}$ and Micha{\l} Studzi\'nski$^{2,4,*}$}
\affiliation{
$^1$ Faculty of Applied Physics and Mathematics, Gda\'nsk University of Technology, 80-233 Gda\'nsk, Poland \\
$^2$ Institute of Theoretical Physics and Astrophysics, University of Gda\'nsk, 80-952 Gda\'nsk, Poland \\
$^3$ School of Science and Technology, Physics Division, University of Camerino, I-62032 Camerino, Italy \\
$^4$ ICFO - Institut de Ciencies Fotoniques, Mediterranean Technology Park, 08860
Castelldefels (Barcelona), Spain \\
$^5$ Institute for Quantum Information, RWTH Aachen University, D-52056 Aachen, Germany
%$^4$ National Quantum Information Centre of Gda\'nsk, 81-824 Sopot, Poland%
}

\date{\today}
%\date{1 July 2009}

\begin{abstract}
We analyze a region of fidelities for qubit which is obtained after an application of a $1 \rightarrow  N$ universal quantum cloner. We express the allowed region for fidelities in terms of overlaps of pure states with irreps of $S_{n}$ ($n = N+1$) showing that the pure states can be taken with real coefficients only. Subsequently, the case $n = 4$, corresponding to a $1 \rightarrow 3$ cloner is studied in more details as an illustrative example. To obtain the main result, we make a convex hull of possible ranges of fidelities related to a given irrep. The formalism allows to construct the state giving rise to a given $N$-tuple of fidelities.
\end{abstract}

\pacs{03.67.Dd, 03.65.Fd, 03.67.Hk}
\keywords{quantum cloning; universal quantum cloning; qubit; Schur - Weyl duality}

\maketitle
\let\oldthefootnote\thefootnote
\renewcommand{\thefootnote}{\fnsymbol{footnote}}
\footnotetext[1]{email: \url{studzinski.m.g@gmail.com}, tel: (+48 58) 551-20-34}
\let\thefootnote\oldthefootnote

\section{Introduction}
\label{sec:introduction}
\subsection{Historical overview: the no-cloning theorem}
In $1982$ Nick Herbert published a paper about faster than light communication, based on quantum correlations. He called his project $FLASH$, an acronym for 'First Laser-Amplified Superluminal Hookup' \cite{Herbert}. Of course, the idea was incorrect but it was the source for Wootters and \.Zurek \cite{WoottersZurek} and independently Dieks \cite{Dieks} to establish the so-called no-cloning theorem. The proof (we follow the version of the proof as in \cite{DAriano}, \cite{DAriano1}, \cite{DAriano2}) is the following; consider a cloning unitary operation $U_{cl}$ that is able to clone every given qubit. Before the copying process, we have states:
	\ben &&|A\> =|x\>|0\>|M\>, \nonumber \\
	     &&|B\> =|x^{'}\>|0\>|M\>, \label{MU} \een
where $|x\>$, $|x^{'}\>$ denote 'text' to copy for each state, $|0\>$ represents a 'blank card' and $|M\>$ is a cloning machine state ($M$ denotes the cloning machine). Acting $U_{cl}$ on (\ref{MU}), we should get:
\ben &&U_{cl}|A\> =|x\>|x\>|M^{'}\>, \nonumber \\
	     &&U_{cl}|B\> =|x^{'}\>|x^{'}\>|M^{''}\>, \label{MUU} \een
where $|M^{'}\>$, $|M^{''}\>$ represent new states of the machine.

Now let us consider a scalar product in both cases:
\be \< A|B\> = \< x|x^{'} \>, \label{modul1} \ee
since $\<0|0\> = \< M|M \> = 1$. On the other hand, for the second case, we have:
\be \< U_{cl}A|U_{cl}B\>=(\<x|x^{'}\>)^{2} \< M^{'}|M^{''}\>. \label{modul2} \ee

Comparing \eqref{modul1} and \eqref{modul2}, we get that in the case of $0 < | \<x|x^{'}\> | < 1$ scalar products do not match each other (but from the property of the unitary action, they should be the same). So, we have contradiction and it proves that one is not able to clone initially unknown quantum states. In general, the no-cloning theorem comes from the linearity of quantum mechanics \cite{WoottersZurek}, \cite{Dieks}.
Note that the no-cloning theorem could be generalized \cite{Gisin} or even strengthened \cite{Jozsa1}.
\renewcommand{\thefootnote}{\arabic{footnote}}

\subsection{Beyond the no-cloning theorem: origins of quantum cloning}	
Although, the no-cloning theorem is fundamental for quantum physics, it is not of great use in practice. The no-cloning theorem states that one is not able to copy an arbitrary quantum state. However, it is well known that performing ideal operations in physics and especially in quantum physics is impossible.
On the other hand, it is obvious that one can copy, though perhaps with a very bad quality.
Thus it is crucial to know the ultimate bounds for the quality of copying.

Several years after Wootters and \.Zurek paper has been published, namely in 1996, Hillery and Bu\u{z}ek published a paper called 'Quantum copying: beyond the no-cloning theorem' \cite{BuzekHillery}. It was the first time, when the above question regarding imperfect cloning has been formulated. Subsequently, the subject was a matter of wide research (see \cite{Gisin} and references therein for a comprehensive review). Here, we would like to point only a few (regarding the cloning of finite quantum states only). Quite soon, the Bu\u{z}ek-Hillery $1 \rightarrow 2$ (qubits) Quantum Cloning Machine ($QCM$) (for all formal definitions of quantum cloning machines, we refer to \cite{Gisin}), was generalized to the case $N_1 \rightarrow N_2$; first for qubits by Bru\ss{} et al. in \cite{Bruss-cloning1998}, and by Gisin and Massar in \cite{Massar}, and then for arbitrary-dimensional states by Werner in \cite{Werner-cloning1998}, and Keyl and Werner in \cite{KW}. Thus the family of symmetric Universal Quantum Cloning Machines ($UQCM$) is, at present, well known. What is more, the asymmetric $UQCM$ were also very heavily studied, let us mention here works of Braunstein et al. \cite{Braunstein-cloning2001}, Cerf \cite{Cerf-cloning2000}, Fiur{\'a}{\v s}ek et al. \cite{Fiurasek-cloning2005}, and Iblisdir et al. \cite{Iblisdir-cloning2004}, \cite{Iblisdir-cloning2005}. To emphasize the relevance of all these works, let us mention that formalism of asymmetric cloning machines is important in the context of quantum cryptography - it can be used in studies of relations between the eavesdropper's information gain and the noise in the channel. Later, efforts were made to unify these two kinds of $QCM$ \cite{Wang-cloning2011}. Of course, a lot of questions and problems still need answers, for instance, optimal state-dependent $QCM$ (as an example, see \cite{Chefles-cloning1999} or \cite{Siomau-cloning2010} - unfortunately, not many results are known for this kind of $QCM$) or optimal asymmetric $QCM$. There is also another 'gap' in quantum cloning: interestingly, up to our best knowledge, one is lacking a general result on an admissible region of fidelities (in general not the optimal one) for universal asymmetric $1 \rightarrow N$ quantum cloning machines; in our work, we want to make progress in this direction, providing an answer to this particular problem in the case of qubits. Let us mention briefly that it has been studied partially, for example in \cite{Fiurasek-cloning2005} (and partially in \cite{Iblisdir-cloning2004}), where expressions for an optimal fidelities region for an asymmetric $1 \rightarrow 3$ quantum cloner were obtained (and some partial results for a $1 \rightarrow N$ cloner).

In this Letter, we shall consider a $1 \rightarrow N$ universal quantum cloning machine (quantum cloner) for qubits. We first point out that this problem could be related to singlet states by recalling a relation between fidelities and singlet states. Using this fact, it turns out that by the application of Schur - Weyl duality, our problem could be quite easily solved and it leads to plots of ranges of fidelities for different irreducible representations of the symmetric group $S_{n}$ (where $n = N + 1$). The method of irreps (irreducible representations) is 'nice' here, because it makes calculations a lot easier. It allows us to decompose our initial Hilbert space into blocks of smaller dimensions, connected to a given partition $\lambda$ ($\lambda$'s label irreps of a symmetric group), and moreover, in our calculations, we can restrict ourself to a pure state only, linked to a given block. After taking the convex hull of the figures corresponding to all partitions, we can obtain the possible range of fidelities. In our case-study example of the $1 \rightarrow 3$ cloner, we check that our results are consistent with the existing methods for $1)$ symmetric cloners (for example, Keyl and Werner work \cite{KW}), namely we obtain that in the case of optimal, symmetric $UQCM$, result of all three fidelities equals $\frac{2}{3}$ is obtained, $2)$ an optimal asymmetric $1 \rightarrow 3$ cloner obtained in \cite{Fiurasek-cloning2005}. Finally, as a direct application of our method, we study one particular example, namely the case when one has $\mathop{\max}\limits_{F_1}\left(F_1+F_3=2F_2 \right)$.

This work is organized as follows. In Section \ref{sec:problem} we start with showing that the quantum cloning problem is equivalent to the entanglement sharing picture, so that the cloning machine is equivalently represented by a multipartite quantum state. Then, we formulate problem that we want to address in this Letter, namely, to determine the region of allowed fidelities for our $QCM$. What is more (and crucial for us), we show a transformation between Bell's states that allows to use Schur - Weyl duality. Section \ref{sec:young} is devoted to the group representation theory, in particular, Schur - Weyl decomposition. We present the basic formalism that leads to Schur - Weyl duality.
At the beginning, we introduce $SWAP$ operators and then we show how they lead to Schur - Weyl decomposition. At the end of this section, we show how Schur - Weyl duality is connected with Young diagrams formalism. This section is strictly mathematically oriented, but as it will be shown in the next section, this formalism allows us to simplify our problem a lot and it makes all calculations quite elementary. In Section \ref{sec:main}, which is the key section of our Letter, we present our main lemmas and theorems for $N-$tuples of fidelities, especially, we show which form of the multipartite state representing  $QCM$ can be used, and also how one can connect fidelity calculations with the formalism of Young diagrams. It follows, that to determine the admissible region of fidelities,
it is enough to  consider overlaps of real vectors with the matrices of irreps of the symmetric group. What is important, all results are valid for $N-$tuples of fidelities, so they describe an asymmetric $1 \rightarrow N$ cloner in general. In Section \ref{sec:case}, which is the case study of the $1 \rightarrow 3$ universal quantum cloner, we present how to obtain the allowed region for triples of fidelities. We also show that our results are consistent with: $a)$ calculations for a symmetric cloning case, predicted by the Werner's formula \cite{Werner-cloning1998} and $b)$ partially with results obtained in \cite{Fiurasek-cloning2005}, where optimal fidelities for this kind of $UQCM$ were presented. At the end, we give an application of our model, namely we show that for any given triple of fidelities from the allowed region, we could reconstruct a state that gives rise to that fidelities. This technique is actually quite general and could be applied to the $N$ number of clones. In Section \ref{sec:S5}, we briefly explore our approach with a higher number of clones, by studying the case of a $1 \rightarrow 4$ $UQCM$.
%in this sense that we could consider not only the symmetric cloning case, but also the asymmetric one. \tred{czemu z mozliwosci rekonstrukcji stanu, wynika
%ze nasza maszyna jest "uniwersalna"? }{\color{green} Troszke niefortunnego slowa uzylem (zmienilem na general) - chodzilo mi w tym zdaniu o to, ze nasza maszynka to nie jest taka zwykla quantum-shmantum rzecz, ktore jest tylko symetryczna badz antysymetryczna - jest uniwersalna, bo lacze te oba przypadki, nie chodzilo mi o to, ze jest uniwersalna jako z definicji uniwersalnej maszyny klonujacej:D}

\section{Statement of the problem}
\label{sec:problem}
 Let us recast a question of cloning in an equivalent picture of entanglement sharing (for simplicity, we focus here on our case study-example, namely the $1 \rightarrow 3$ $UQCM$ - all calculation could be easily extended to the case of $N$ clones). Suppose that we have an initial state described by the Bell state: $| \psi^{+} \> = \sqrt{\frac{1}{2}} \left( |00\> + |11\> \right)$. We also have a cloning machine $M$, which could be described by a completely positive, trace preserving (CPTP) map $\widetilde{\Lambda}$ (note that the cloning map $\widetilde{\Lambda}$
applied to the second subsystem of the maximally entangled state $|\psi^{+} \>$, when the first is untouched, produces, in general, mixed state that contains all the information about the map). As an output, we want to obtain $N$ shares of our initial state.
%each copy is strongly entangled with initial state (as it was shown in \cite{BuzekHillery}).%
Our scheme is presented in Figure \ref{fig:cloning} (for our case-study example).
\begin{figure}[h!]
\begin{center}
\includegraphics[width=2.5in]{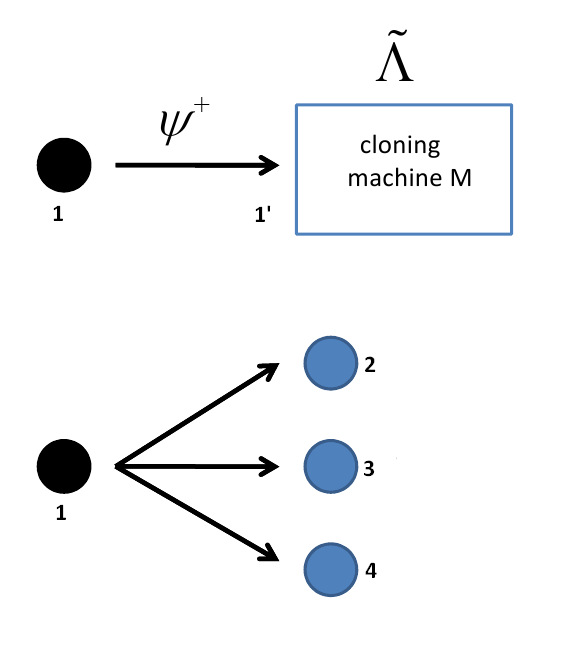}
\end{center}
\caption{Situation before and after cloning.}
\label{fig:cloning}
\end{figure}
%Let us write state connected to the cloning machine as
Let us write a state obtained after an application of the cloning map to the half of $|\psi\rangle^+$
\be \rho_{1234} = \frac{1}{2} \left( \operatorname{\mbI} \ot \widetilde{\Lambda} \right)\left( | \psi^{+} \> \< \psi^{+} | \right), \label{map} \ee
where indexes $1$, $2$, $3$ and $4$ are connected to an initial state and clones, according to Figure \ref{fig:cloning}.

We want to calculate an allowed region for singlet fractions $F_{1i}$ (between the initial state and one of the three clones (copies)), denoted by \cite{singlet}
\ben
&&F_{12} = \< \psi_{12}^{+} | \tr_{34} (\rho_{1234}) | \psi_{12}^{+} \>, \nonumber \\
&&F_{13} = \< \psi_{13}^{+} | \tr_{24} (\rho_{1234}) | \psi_{13}^{+} \>, \nonumber \\
&&F_{14} = \< \psi_{14}^{+} | \tr_{23} (\rho_{1234}) | \psi_{14}^{+} \>.
\een
Note that:
\be
F_{1i} = \< \psi^{-}_{1i} | \tr_{\overline{1i}}(\widetilde{\rho}_{1234}) | \psi^{-}_{1i} \>
\label{note},
\ee
where $\tr_{\overline{1i}}$ means partial trace over all systems except $1i$,
and $|\psi^{-}_{1i}\>$ and $\widetilde{\rho}_{1234}$ are defined below.
The vector $|\psi^{-}_{1i}\> = U \ot \operatorname{\mbI} | \psi^{+}_{1 \widetilde{1}} \>$, $| \psi^{-} \>$ is obtained after the action of $-i\sigma_{y}$ on $|\psi^{+}\>$:

\begin{equation}
\begin{split}
 \label{sigmay} | \psi^{-} \> &= -i \sigma_y | \psi^{+} \> = -i \sigma_y \sqrt{\frac{1}{2}} \left( |00\> + |11\> \right) =\\
&= \sqrt{\frac{1}{2}} \left( |01\> - |10\> \right),
\end{split}
\end{equation}
where $\sigma_y$ is one of the Pauli matrices: $\sigma_y = \begin{bmatrix}
0 & -i\\
i & 0
\end{bmatrix}$. %State $\sqrt{\frac{1}{2}} \left( |01\> - |10\> \right)$ is another example of Bell state.%
Using \eqref{sigmay} we can write that:
\be |\psi^{-}_{1 \widetilde{1}} \> = U \ot \operatorname{\mbI} |\psi^{+}_{1\widetilde{1}} \>, \ee
where $U = -i \sigma_y$.
The state $\widetilde{\rho}_{1234}$ from Eq. \eqref{note} is obtained after the following transformation:
\be
\begin{split}
\widetilde{\rho}_{1234} &= ( \operatorname{\mbI} \ot \widetilde{\Lambda} )|\psi^{-}_{1\widetilde{1}} \> \\ &=
( U \ot \operatorname{\mbI} )  \left(( \operatorname{\mbI} \ot \widetilde{\Lambda} )|\psi^{+}_{1\widetilde{1}}\>   \< \psi^{+}_{1\widetilde{1}} | \right) ( U^{\dag} \ot \operatorname{\mbI}).
\end{split}
\ee
The four-partites states $\widetilde{\rho}_{1234}$, with the constraint $\widetilde{\rho}_1=I/2$,
are in one-to-one correspondence with cloning machines, and the cloning fidelity of a given
machine is determined by fidelities $F_{1i}$ of the corresponding state.

Note that for any channel $\Lambda$, fidelity of the state $\left( \operatorname{\mbI} \ot \widetilde{\Lambda} \right)\left( | \psi^{+} \> \< \psi^{+} | \right)$ can be related to the average fidelity of transmission
of an initial state as follows~\cite{MHPH}:
\be f = \frac{Fd + 1}{d+1}, \label{mipa} \ee
where $d$ is the dimension of the Hilbert space $\cH \cong \C^{d}$. Thus instead of cloning fidelities $f$,
we can consider singlet fractions $F$, which we will further call simply fidelities,
while the fidelity of cloning we will term the 'cloning fidelity'.
%Now, we want to make a crucial remark concerning our work, because of Eq. \eqref{mipa}, starting from now
%on, we will be using term 'fidelity', in the sense of the 'singlet fraction' (unless it is stated
%differently), we hope that this will not lead to any confusions.

Werner \cite{Werner-cloning1998} provided the following formula for an optimal cloning fidelity
of universal symmetric $N_1 \rightarrow N_2$ cloning machine:
%we can relate singlet fraction with fidelity (in the case of the optimal fidelity for universal symmetric $N
 %- M$ cloning machine) as follows:
\be
f_{N_1 N_2}(d) = \frac{N_1}{N_2} + \frac{(N_2-N_1)(N_1+1)}{N_2(N_1+d)}.
\ee
In our case, when $N_1 = 1$, $N_2 = 3$ and $d = 2$, we obtain that the fidelity $f$ for the universal, symmetric cloning machine should be equal to
\be
f = \frac{1}{3} + \frac{4}{9} = \frac{7}{9}. \label{werner}
\ee

We can now formulate question for our case study-example: which values of triples of cloning fidelities
$\left( f_{12}, f_{13}, f_{14} \right)$ are allowed for a universal cloning machine? (Of course in general our question is the following: which values of $N-$tuple of cloning fidelities
$\left( f_{12}, f_{13}, \ldots, f_{1n} \right)$ are allowed for a universal cloning machine?)
As said above, we shall address equivalent question: what values of triples of fidelities
$\left( F_{12}, F_{13}, F_{14} \right)$ are allowed for an arbitrary state
of a maximally mixed first subsystem?
In the next sections, the answer is presented.

%\tred{W poprzednich sekcjach - $\pi$ nie powinno byc w nawiasach - bo sie myli z cyklem.}
%\section{The new model}
\section{Mathematical introduction: Schur - Weyl decomposition}
\label{sec:young}
In this section we introduce necessary mathematical tools from group theory. We are especially focused on Schur-Weyl duality \cite{Audenaert}.

Consider a unitary representation of a permutation group $S_n$ acting on the $n-$fold tensor product of complex spaces $\mathbb{C}^d$, so our full Hilbert space is $\mathcal{H}\cong (\mathbb{C}^d)^{\otimes n}$. For a fixed permutation $\pi\in S_n$ a unitary transformation $V_{\pi}$ is given by
\be
\label{unitary}
V_{\pi}\left( |i_1\>\otimes \ldots  \otimes |i_n\>\right)=|i_{\pi(1)}\> \otimes  \ldots  \otimes |i_{\pi(n)}\>,
\ee
where $|i_1\>,\ldots,|i_n\>$ is a standard basis in $(\mathbb{C}^d)^{\otimes n}$. The space of rank$-n$ tensors can be also consider as a representation space for a general linear group $\operatorname{GL}(d,\mathbb{C})$. Let $U\in \operatorname{GL}(d,\mathbb{C})$, thus, this  induces in the tensor product $(\mathbb{C}^d)^{\otimes n}$ the following transformation
\be
\label{unit}
U^{\ot n}\left(|i_1\>\otimes  \ldots  \otimes |i_n\>\right)=U|i_1\> \otimes  \ldots \otimes U|i_n\>.
\ee
A key property is that these two representations turn out to be each other commutants.  Any operator  on $(\mathbb{C}^d)^{\otimes n}$ that commutes with all $U^{\otimes n}, \ \forall U\in \operatorname{GL}(d,\mathbb{C})$, is a linear combination of permutation matrices $V_{\pi}$. Conversely, any operator commuting with all permutation matrices $V_{\pi}, \ \forall \pi\in S_n$, is a linear combination of $U^{\otimes n}$. This duality is called Schur-Weyl duality. It was shown (see, for example, \cite{Harrow-PhD} or \cite{Harrow-Schur2006}) that there always exists some basis called the Schur basis which gives decomposition of $V_{\pi}$ and  $U^{\otimes n}$ into irreducible representations (irreps) simultaneously. Thanks to this, the space $(\mathbb{C}^d)^{\otimes n}$ can be decomposed into irreducible representations of $S_n$
\be
\label{decomp}
(\mathbb{C}^d)^{\otimes n}\cong \bigoplus_{\lambda \vdash n} \mathcal{H}_{\lambda}^{\mathcal{U}}\otimes \mathcal{H}_{\lambda}^{\mathcal{S}},
\ee
where $\lambda$ labels inequivalent irreps of $S_n$ and $\mathcal{H}_{\lambda}^{\mathcal{U}}$ is the multiplicity space. It is called Schur-Weyl decomposition. The labels $\lambda$ are allowed partitions of some natural number $n$. Every partition is a sequence $\lambda=(\lambda_1,\ldots,\lambda_r)$ satisfying
\be
\label{partition}
\forall_{i} \  \lambda_i \geq 0,\quad \lambda_1\geq \lambda_2\geq \ldots \geq \lambda_r,\quad \sum_{i=i}^r\lambda_i=n,
\ee
where $r\in \{1,\ldots,n\}$. Every such partition corresponds to some diagram, which is called the Young diagram \cite{Fulton1991-book-rep}. Here are few examples of Young diagrams for $n=4,6$ and $3$ respectively:
\[
\begin{split}
&\yng(2,2)\qquad \qquad \yng(3,2,1)\qquad  \qquad \qquad \yng(1,1,1)\\
 \ \lambda&=(2,2),\qquad \lambda=(3,2,1),\qquad \  \lambda=(1,1,1)\\
%\text{A few}&\text{ examples of Young diagrams with corresponding partitions $\lambda$. }
\end{split}
\]
In this Letter we are interested in representations on the symmetric part $\mathcal{H}^{\mathcal{S}}_{\lambda}$. For example the $SWAP$ operator $\mathbb{V}_{\pi}$ can be decomposed, due to Schur - Weyl decomposition, in the following way:
\begin{equation}
\label{eq:swapdec}
\mathbb{V}_{\pi}=\bigoplus_{\lambda} \operatorname{\mbI}_{r(\lambda)} \ot \widetilde{\mbV}_{\pi}^{\lambda},
\end{equation}
where $\pi \in S_n$ and $r(\lambda)$ is the dimension of a unitary part.
The operators $\widetilde{\mbV}_{\pi}^{\lambda}$ are irreducible representations
of $S_n$.
 Thanks to the above-mentioned method, we can decompose $U^{\otimes n}$-invariant states in the following way:
\begin{equation}
\label{general2}
\rho_{1\ldots n}=\bigoplus_{\lambda} \mbI_{r(\lambda)}\ot \widetilde{\rho}^{\lambda}.
\end{equation}
Note that fidelities with singlet states are invariant under averaging over $U \ot U$ transformations.
Therefore the $N-tuple$ of fidelities is invariant under an application of transformation $U^{\ot n}$
to the state, so we can always use density operators that are commutant of $U^{\ot n}$ i.e.
they are of the form~\eqref{general2}.

\section{Expression for an allowed region of $N$-tuples of fidelities.}
\label{sec:main}
In this section we provide a general formula for an allowed region of $N$-tuples
of fidelities in terms of overlaps of pure states with irreducible representations of $S_n$ ($n = N+1$).
This is contained in Theorem \ref{thm:main}.
We further show in Lemma \ref{real} that one can restrict attention to pure states
with real coefficients, since they determine the admissible region of fidelities.

%\subsection{Useful definitions and lemmas}
% \begin{remark}
% Every $n-$particle state $\rho_{1\ldots n}$ can be decomposed in following way
% \begin{equation}
% \label{general}
% \rho_{1\ldots n}=\bigoplus_{\lambda} \rho^{\mathcal{U}}_{\lambda}\ot \rho^{\mathcal{S}}_{\lambda},
% \end{equation}
% where $\rho^{\mathcal{U}}_{\lambda}$ is density operator on unitary part and $\rho^{\mathcal{S}}_{\lambda}$ is density operator on symmetry part.
% \end{remark}
\begin{lemma}
\label{FF}
%Fidelity $F_{1k}$ between an initial state of our $QCM$ which is of the form~\eqref{general2} and its 'k-th' %clone of the initial state could be written as
Fidelity $F_{1k}$ as defined in \eqref{note} is of the form
\be
\label{FFF}
F_{1k}=\sum_{\lambda}F_{1k}^{\lambda},
\ee
where
\be
\label{ff}
F_{1k}^{\lambda}
%=\frac{1}{2}-\frac{1}{2}\tr \left(\rho^{\lambda}\widetilde{\mbV}_{(1k)}^{\lambda} \right)
= \frac{1}{2}- \frac{1}{2} \tr \left( \rho^{\lambda} \widetilde{\mbV}_{(1k)}^{\lambda} \right),
\ee
The lower index $(1k)$ means a permutation that swaps $1$ and $k$,
and $\rho^\lambda$'s are arbitrary normalized states on partition $\lambda$.
\end{lemma}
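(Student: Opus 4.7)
The plan is to reduce the singlet fidelity $F_{1k}$ to the expectation value of a single transposition in $S_n$, and then to apply the Schur--Weyl decomposition recalled in equations~\eqref{eq:swapdec} and \eqref{general2} to split that expectation blockwise over the partitions~$\lambda$.

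The first step is a purely algebraic identity: I would use the well-known fact that the two-qubit singlet projector satisfies $|\psi^{-}\>\<\psi^{-}| = \tfrac{1}{2}(\mbI-\mbV)$, where $\mbV$ is the qubit swap. Substituting this into the defining formula~\eqref{note}, pulling the partial trace outside by tensoring the singlet projector with the identity on the complementary sites, and using $\tr(\widetilde{\rho})=1$, gives at once
\[
F_{1k} \;=\; \tfrac{1}{2} \;-\; \tfrac{1}{2}\,\tr\bigl(\widetilde{\rho}\,\mbV_{(1k)}\bigr),
\]
where $\mbV_{(1k)}$ is now the permutation operator on $(\C^{2})^{\otimes n}$ associated with the transposition exchanging sites $1$ and $k$ and acting as identity elsewhere.

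The second step invokes the twirling remark at the end of Section~\ref{sec:young}: since the singlet fidelities are unchanged by a $U^{\otimes n}$ average of $\widetilde{\rho}$, one may without loss of generality take $\widetilde{\rho}$ to lie in the commutant of $U^{\otimes n}$, hence in the block-diagonal form~\eqref{general2}. The transposition $\mbV_{(1k)}$ decomposes simultaneously as in~\eqref{eq:swapdec}. Multiplying these two decompositions block by block, tracing over the multiplicity factor $\mbI_{r(\lambda)}$ on each $\mathcal{H}^{\mathcal{U}}_\lambda$, and absorbing the resulting dimension factors into normalized block states $\rho^\lambda$ on $\mathcal{H}^{\mathcal{S}}_\lambda$, converts $\tr(\widetilde{\rho}\,\mbV_{(1k)})$ into a sum over $\lambda$ of $\tr(\rho^\lambda\,\widetilde{\mbV}^\lambda_{(1k)})$. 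Redistributing the constant $\tfrac{1}{2}$ across partitions using the normalization condition $\sum_\lambda \tr(\rho^\lambda)=1$ then produces precisely~\eqref{ff} summed over $\lambda$ as in~\eqref{FFF}.

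The only genuinely delicate point is notational bookkeeping: I will need to track how the multiplicity dimension $r(\lambda)$ interacts with the overall normalization of $\widetilde{\rho}$ so that the per-block operators $\rho^\lambda$ indeed come out normalized as states on $\mathcal{H}^{\mathcal{S}}_\lambda$, and verify that the residual $\tfrac{1}{2}$ is split consistently across blocks so that the sum~\eqref{FFF} reproduces the original $F_{1k}$. Beyond this, no substantive analytical step is needed: once the Schur--Weyl decomposition is in place and the multiplicity spaces are traced away, the identity reduces block by block to the cyclic property of the trace.
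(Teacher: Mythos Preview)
Your proposal is correct and follows essentially the same route as the paper: rewrite the singlet projector as $\tfrac{1}{2}(\mbI-\mbV_{(1k)})$ to get $F_{1k}=\tfrac{1}{2}-\tfrac{1}{2}\tr(\mbV_{(1k)}\rho_{1\ldots n})$, then insert the Schur--Weyl block decompositions~\eqref{eq:swapdec}, \eqref{general2} and trace out the multiplicity factor to obtain the sum over~$\lambda$. Your remark about the normalization bookkeeping is apt: the paper defines $\rho^{\lambda}=d_{\lambda}\widetilde{\rho}^{\lambda}$ so that $\sum_{\lambda}\tr\rho^{\lambda}=1$, and this is precisely what is needed to redistribute the constant~$\tfrac{1}{2}$ consistently across blocks---a point the paper itself leaves somewhat implicit.
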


\begin{proof}
From the definition of a fidelity we can write
\be
F_{1k}=\langle \psi_{1k}|\rho_{1k}|\psi_{1k}\rangle=\tr\left(\rho_{1k}|\psi_{1k}\rangle\langle \psi_{1k}| \right), \label{nF}
\ee
where $|\psi_{1k}\> \< \psi_{1k}| = \frac{1}{2}(id_{1k} - \mbV_{1k})$, $\rho_{1k} = \tr_{\overline{1k}} \rho_{1\ldots n}$ and $\tr_{\overline{1k}}$ denote partial trace over all
systems except $1$ and $k$. Expanding \eqref{nF}, we obtain:
\ben \label{nF1}
F_{1k} &=& \tr \left( \frac{1}{2} \rho_{1k} (id_{1k} - \mbV_{(1k)}) \right) = \tr \left( \frac{1}{2} \rho_{1k} - \frac{1}{2}\rho_{1k} \mbV_{(1k)} \right) \nonumber \\
&=& \frac{1}{2} - \frac{1}{2}\tr \left( \mbV_{(1k)} \rho_{1\ldots n} \right). \een
Now we can use Schur - Weyl decomposition to represent $\mbV_{(1k)}$ and $\rho_{1\ldots n}$:
\be \mbV_{(1k)}=\bigoplus_{\lambda} \operatorname{\mbI}_{r(\lambda)} \ot \widetilde{\mbV}_{(1k)}^{\lambda}, \ \ \rho_{1\ldots n}=\bigoplus_{\lambda} \operatorname{\mbI}_{r(\lambda)} \ot \widetilde{\rho}^{\lambda} \label{Vr}, \ee
Inserting \eqref{Vr} into \eqref{nF1}, we have:
\begin{equation}
\label{oF}
\begin{split}
F_{1k} &= \frac{1}{2} - \frac{1}{2} \tr \left(\left(\bigoplus_{\lambda} \operatorname{\mbI}_{r(\lambda)} \ot \widetilde{\rho}^{\lambda}\right)\left(\bigoplus_{\mu} \operatorname{\mbI}_{r(\mu)} \ot \widetilde{\mbV}_{(1k)}^{\lambda}\right)\right)= \\
&= \sum_{\lambda}\left( \frac{1}{2}- \frac{1}{2} \tr (\rho^{\lambda} \widetilde {\mbV}^{\lambda}_{(1k)}) \right).
\end{split}
\end{equation}
%for a fixed $\lambda$.
Equation \eqref{oF} could be rewritten as:
\be \label{part} F_{1k} = \sum_{\lambda} F_{1k}^{\lambda}, \ee
where $F_{1k}^{\lambda} = \frac{1}{2} - \frac{1}{2}\tr \left(\rho^{\lambda}\widetilde{\mbV}^{\lambda}_{(1k)} \right)$,
and $\rho^{\lambda} = d_{\lambda} \widetilde{\rho}^{\lambda}$ and $d_{\lambda}$ stands for the dimension of a given partition.
%One can also see that $\sum_{\lambda} \tr (\rho^{\lambda})=1$, which is in fact our normalization relation. Note finally, that arbitrary states $\rho^\lambda$ satisfying
%the above normalization give rise to some state $\rho_{1\ldots n}$. Moreover,
%any state of the form \eqref{general2} has the subsystem $1$. Therefore,
%an arbitrary $k$-tuple of fidelities $F_{1k}$ corresponds to some set of  $\rho^\lambda$'s
%satisfying the above normalization.

%One can see that $G^{\lambda}_{1k}$ consists knowledge about Young tableaus. To obtain full knowledge about our model, also calculations according to \eqref{nF} and \eqref{part} should be done for $F_{12}$ and $F_{14}$, so we should calculate:
%\be F_{1i} = \sum_{\lambda} F_{1i}^{\lambda} \label{fpart} \ee
\end{proof}

 Now we are in position to formulate the main theorem of this section:
\begin{theorem}
\label{thm:main}
The set $\mathcal{F}$ of admissible vectors of fidelities $\left\{ F_{12}, \ldots, F_{1n} \right\}$ is of the form
\be
\mathcal{F} = \operatorname{conv} \left( \bigcup_\lambda  \mathcal{F}^\lambda \right),
 \ee
where $\operatorname{conv}$ stands for a convex hull, the union runs over all irreps of $S_n$
 and
\be  \mathcal{F}^\lambda = \left\{ \left( F_{12}^{\lambda},\ldots, F_{1n}^{\lambda}\right) \ : |\psi\rangle \in \C^{d_{\lambda}}\  \right\}, \ee \label{maintheo}
where $F_{1k}^{\lambda}$ are of the form: $F_{1k}^{\lambda} = \frac{1}{2} - \frac{1}{2}\< \psi |\widetilde{{\mbV}}^{\lambda}_{(1k)} |\psi \>$, and where  $|\psi \>$ is a pure state.
\end{theorem}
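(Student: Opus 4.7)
The plan is to lift Lemma \ref{FF} from a statement about a single fidelity to a statement about the whole vector, by carefully tracking normalizations across the Schur--Weyl blocks. First I would write a $U^{\otimes n}$-invariant cloning state as $\rho_{1\ldots n}=\bigoplus_\lambda \mbI_{r(\lambda)}\otimes \widetilde{\rho}^\lambda$, then set $p_\lambda = r(\lambda)\tr\widetilde{\rho}^\lambda$ and $\rho^\lambda = \widetilde{\rho}^\lambda/\tr\widetilde{\rho}^\lambda$. This makes $\{p_\lambda\}$ a probability distribution and each $\rho^\lambda$ a genuine normalized state on the $d_\lambda$-dimensional symmetric factor $\mathcal{H}_\lambda^{\mathcal{S}}$. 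Inserting this into the identity proved in Lemma \ref{FF} would give, simultaneously for every $k$,
\be
F_{1k}=\sum_\lambda p_\lambda\left(\tfrac{1}{2}-\tfrac{1}{2}\tr\!\left(\rho^\lambda \widetilde{\mbV}_{(1k)}^\lambda\right)\right),
\ee
so the tuple $(F_{12},\ldots,F_{1n})$ is manifestly a convex combination over $\lambda$ of the corresponding per-irrep contribution vectors.

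To obtain the inclusion $\mathcal{F}\subseteq \operatorname{conv}(\bigcup_\lambda \mathcal{F}^\lambda)$, I would next spectrally decompose each $\rho^\lambda = \sum_j q_j^\lambda |\psi_j^\lambda\rangle\langle\psi_j^\lambda|$, which expresses each per-irrep contribution as a further convex mixture of pure-state contribution vectors drawn from $\mathcal{F}^\lambda$. Nesting the two convex combinations places the fidelity vector in the desired convex hull.

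For the reverse inclusion I would take any point $\sum_\lambda p_\lambda\vec v^{\lambda}$ with $\vec v^{\lambda}$ generated by a pure $|\psi_\lambda\rangle\in \C^{d_\lambda}$ and reconstruct the invariant density operator
\be
\rho_{1\ldots n}=\bigoplus_\lambda \frac{p_\lambda}{r(\lambda)}\,\mbI_{r(\lambda)}\otimes |\psi_\lambda\rangle\langle \psi_\lambda|,
\ee
then verify that it is normalized and, via Lemma \ref{FF}, that its fidelity tuple is exactly $\sum_\lambda p_\lambda\vec v^{\lambda}$. The cloning-machine constraint $\widetilde{\rho}_1=\mbI/2$ comes for free, since any $U^{\otimes n}$-invariant state has a single-site reduction that commutes with every qubit unitary and must therefore be maximally mixed. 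The main obstacle will not be conceptual but bookkeeping: I will have to be careful to distinguish the ``bare'' block $\widetilde{\rho}^\lambda$, the normalized $\rho^\lambda$, the multiplicity factor $r(\lambda)$ and the symmetric-part dimension $d_\lambda$, so that the convex-combination structure falls out of Schur--Weyl cleanly.
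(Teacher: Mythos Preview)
Your proposal is correct and follows essentially the same route as the paper: both decompose the $U^{\otimes n}$-invariant state into Schur--Weyl blocks, normalize each block, and spectrally decompose to reduce to pure states on $\mathcal{H}^{\mathcal{S}}_\lambda$. The paper packages the convexity step abstractly via Lemma~\ref{affinite} (the image of a convex set under an affine map is the convex hull of the image of its extreme points) rather than writing out the convex combination explicitly as you do, but the underlying argument is identical.
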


\begin{proof}
At the beginning, let us consider the following mapping:
\be
\vec{F}: \ \operatorname{P} \rightarrow \mathcal{R}^{N}
\ee
which maps states $\rho_{1\ldots n} \in \operatorname{P}$ ($\operatorname{P}$ stands for a convex set of all states $\rho_{1\ldots n}$) into the $N$-tuples $(F_{12}, \ldots, F_{1n}) \in \mathcal{R}^N$. Explicitly, we have
\be \label{RHS}
\vec{F}(\rho_{1\ldots n}) = \left[ F_{12}(\rho_{1\ldots n}), \ldots, F_{1n}(\rho_{1\ldots n}) \right].
\ee
This mapping is affine (Lemma~\ref{affinite}), i.e. is of the form:
\be \label{ee}
\vec{F}(\rho_{1\ldots n}) = \widetilde{\vec{F}}(\rho_{1\ldots n}) + \vec{C}, \ee
where $\widetilde{\vec{F}}: \operatorname{P} \rightarrow \mathcal{R}^N $ is linear.
Indeed, one can see that $RHS$ of \eqref{RHS} could be written as:

\be
\label{een}
\begin{split}
F_{12}(\rho_{1\ldots n})&=\frac{1}{2}\tr\left[ (\mbV_{(1)\ldots (n)}-\mbV_{(12)})\rho_{1\ldots n}\right]=\\
&=\frac{1}{2}-\frac{1}{2}\tr(\mbV_{(12)}\rho_{1\ldots n}),\\
& \qquad \qquad \vdots \\
F_{1n}(\rho_{1\ldots n})&= \frac{1}{2}\tr\left[(\mbV_{(1)\ldots (n)}-\mbV_{(1n)})\rho_{1\ldots n}\right]=\\
&=\frac{1}{2}-\frac{1}{2}\tr(\mbV_{(1n)}\rho_{1\ldots n}).
\end{split}
\ee

Comparing \eqref{ee} and \eqref{een}, we obtain that in our case $\vec{C} = \left[ \frac{1}{2}, \ldots, \frac{1}{2} \right]$ and $\widetilde{\vec{F}}(\rho_{1\ldots n}) = -\left[\tr (\mbV_{(12)}\rho_{1\ldots n}), \ldots, \tr (\mbV_{(1n)}\rho_{1\ldots n}) \right]$, which is obviously linear with respect to $\rho_{1\ldots n}$.

One can note that in general $\widetilde{\rho}^{\lambda}$ from Eq. \eqref{Vr} is a mixed state, but according to Lemma \ref{affinite}, we can take the mapping of extreme points of $\rho_{1\ldots n}$ of the form:
\be
\label{horo}
\begin{split}
\rho_{1\ldots n} &= \bigoplus_{\lambda} \operatorname{\mbI}_{r(\lambda)} \ot \widetilde{\rho}^{\lambda} = \bigoplus_{\lambda}  \frac{1}{d_{\lambda}} \operatorname{\mbI}_{r(\lambda)} \ot d_{\lambda} \widetilde{\rho}^{\lambda}  \\
&= \bigoplus_{\lambda} \frac{1}{d_{\lambda}} \operatorname{\mbI}_{r(\lambda)} \ot \rho^{\lambda} = \bigoplus_{\lambda} p_{\lambda} \left( \frac{\operatorname{\mbI}_{r(\lambda)}}{d_{\lambda}}  \ot \hat{\rho}^{\lambda} \right),
\end{split}
\ee
where $\hat{\rho}^{\lambda} = \frac{\rho^{\lambda}}{\tr \rho^{\lambda}}$, so namely, it is normalized, and $p_{\lambda} = \tr \rho^{\lambda}$. We can now eigen-decompose $\hat{\rho}^{\lambda}$:
\be \hat{\rho}^{\lambda} = \sum_{i=1}^{n_{\lambda}} p_{i}^{\lambda} | \psi^{\lambda}_{i} \> \< \psi^{\lambda}_{i} |, \label{dec} \ee
where $\sum_{\lambda} p_{\lambda} = 1$. \\
Inserting Eq. \eqref{dec} into Eq. \eqref{horo}, we get that:
\be \label{ex}
\bigoplus_{\lambda} \sum_{i=1}^{n_{\lambda}} p_{\lambda} p_{i}^{\lambda} \left( \frac{\operatorname{\mbI}_{r(\lambda)}}{d_{\lambda}} \ot | \psi_{i}^{\lambda} \> \< \psi_{i}^{\lambda} | \right), \ee
we see from \eqref{ex} that extreme points are of the form:
\be \label{lipka}
\rho_{extreme}^{\lambda} = \frac{\operatorname{\mbI}_{r(\lambda)}}{d_{\lambda}} \ot | \psi^{\lambda} \> \< \psi^{\lambda} |, \ee
where $\lambda$ runs over all irreps from $S_n$ and $| \psi^{\lambda} \>$ is an arbitrary state form the space of irrep linked to a given partition $\lambda$. Inserting Eq. \eqref{lipka} into Eq. \eqref{oF}, we obtain the desired result.
%where as a state $\psi^{\lambda}$ we insert $\psi^{(2,2)} \in \mathcal{C}^2$ and $\psi^{(3,1)} \in \mathcal{C}^3$, respectively.
%\end{proof}
%From Eq. \eqref{lipka}, we are able to conclude that possible are values of $n-$tuples $(F_{12}, \ldots, F_{1n})$ of the %fidelities with initial state of the form $\rho_{1 \ldots n} = \left( \operatorname{\mbI} \ot \widetilde{\Lambda} \right) %|\psi^{+}\> \< \psi^{+} |$).
% that form a figure which is bounded from inside by the three-dimensional figure from Figure (\ref{fig:ch}).
\end{proof}

We note that to determine the allowed region of fidelities, it is enough to consider only
vectors of real coefficients.
\begin{lemma}
\label{real}
To generate a convex hull of the allowed region of fidelities, it is sufficient to consider pure states of real coefficients only.
\end{lemma}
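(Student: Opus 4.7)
The plan is to exploit the fact that permutation operators are naturally represented by real matrices. First I would observe that in the standard computational basis of $(\C^d)^{\otimes n}$, every permutation operator $\mbV_\pi$ acts as a real matrix (it merely permutes basis vectors), and $\mbV_{(1k)}$ is in addition symmetric, because the transposition $(1k)$ is an involution. Since the Schur--Weyl block-diagonalization can be performed by a real orthogonal change of basis (Young's orthogonal form of the irreps of $S_n$), the irreducible blocks $\widetilde{\mbV}_{(1k)}^{\lambda}$ inherit this property and can be taken to be real symmetric matrices on $\C^{d_\lambda}$.

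Given this, I would write an arbitrary unit vector $|\psi\rangle \in \C^{d_\lambda}$ as $|\psi\rangle = |a\rangle + i|b\rangle$ with $|a\rangle, |b\rangle$ real, and set $M := \widetilde{\mbV}_{(1k)}^{\lambda}$. A direct expansion of $\langle \psi|M|\psi\rangle$ yields
\begin{equation*}
\langle \psi | M | \psi \rangle = \langle a | M | a \rangle + \langle b | M | b \rangle,
\end{equation*}
the cross-terms $i\langle a|M|b\rangle - i\langle b|M|a\rangle$ cancelling because reality and symmetry of $M$ force $\langle a|M|b\rangle = \langle b|M|a\rangle \in \R$. Writing $p = \langle a|a\rangle$, $1-p = \langle b|b\rangle$ and normalizing $|a\rangle, |b\rangle$ to real unit vectors $|\hat a\rangle, |\hat b\rangle$, a short computation then gives
\begin{equation*}
F_{1k}^{\lambda}(\psi) = p\, F_{1k}^{\lambda}(\hat a) + (1-p)\, F_{1k}^{\lambda}(\hat b).
\end{equation*}

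The decisive observation is that the weight $p$ is independent of $k$, so the whole tuple $(F_{12}^{\lambda}(\psi),\ldots,F_{1n}^{\lambda}(\psi))$ lies on the segment joining the two real-vector tuples $\vec F^{\lambda}(\hat a)$ and $\vec F^{\lambda}(\hat b)$. Combined with Theorem~\ref{thm:main}, this shows that replacing $\mathcal{F}^{\lambda}$ by its subset obtained from real $|\psi\rangle$ only does not change the overall convex hull $\operatorname{conv}\bigl(\bigcup_\lambda \mathcal{F}^\lambda\bigr)$, which is precisely the claim.

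The main obstacle is justifying that the blocks $\widetilde{\mbV}_{(1k)}^{\lambda}$ can be simultaneously taken real and symmetric: the Schur basis used elsewhere in the paper need not be real a priori, and the cancellation of cross-terms above will fail in any basis for which $M$ is not symmetric. One therefore has to invoke the classical fact that every irrep of $S_n$ admits a real orthogonal realization, or verify it directly via Young's orthogonal form, before the splitting argument can be applied.
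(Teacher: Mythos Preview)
Your proof is correct, and it rests on the same key fact the paper uses: in Young's orthogonal form the blocks $\widetilde{\mbV}^{\lambda}_{(1k)}$ are real symmetric matrices (the paper establishes this in a footnote by writing $\widetilde{\mbV}^{\lambda}_{(1k)}=A\,\widetilde{\mbV}^{\lambda}_{(k-1,k)}A^{T}$ with $A$ a product of adjacent-transposition matrices). From there, however, the two arguments diverge in execution.

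The paper replaces $\rho_C^\lambda=|\psi\rangle\langle\psi|$ by its real part $\rho_{R-C}^\lambda$ (entries $|a_j|^2$ and $\Re(a_j\bar a_s)$), checks that the traces against $\widetilde{\mbV}^{\lambda}_{(1k)}$ coincide, and then proves $\rho_{R-C}^\lambda\ge 0$ by writing it as a Hadamard product $A\bullet C$ of the rank-one matrix $A_{js}=|a_j||a_s|$ with the Gramian $C_{js}=\cos(\phi_j-\phi_s)$, invoking the Schur product theorem. Finally it appeals to the real spectral theorem to decompose $\rho_{R-C}^\lambda$ into real pure states.

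Your route short-circuits all of this: the splitting $|\psi\rangle=|a\rangle+i|b\rangle$ gives directly $\rho_{R-C}^\lambda=|a\rangle\langle a|+|b\rangle\langle b|$, so positivity is immediate and the convex decomposition is explicitly a two-term one, $\vec F^\lambda(\psi)=p\,\vec F^\lambda(\hat a)+(1-p)\,\vec F^\lambda(\hat b)$. This is strictly more elementary (no Hadamard/Gramian machinery, no appeal to Facts~\ref{Had} and~\ref{RS}) and slightly sharper, since it exhibits the complex-state tuple as lying on a \emph{segment} between two real-state tuples rather than merely in their convex hull. The paper's approach, on the other hand, makes the connection to the ``realified'' density matrix $\rho_{R-C}^\lambda$ more explicit, which may be useful if one wants to track the mixed state itself rather than just the fidelity tuple.
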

\begin{proof}
We need to show that, in our case, a kind of majorization of complex pure states by real ones occurs. To prove that, note that our operators $\widetilde{\mbV}^{\lambda}_{(1i)}$ (\ref{opV}) are symmetric and real\footnote[1]{Thanks to \cite{Chen2002}, we know that swap operators $\widetilde{\mbV}^{\lambda}_{(i,i+1)}$ are symmetric and real. It easy to see that every swap operator between system 1 and $i$ is of the following form: $\widetilde{\mbV}^{\lambda}_{(1i)}=\left(\widetilde{\mbV}^{\lambda}_{(12)} \widetilde{\mbV}^{\lambda}_{(23)}\cdot \ldots \cdot \widetilde{\mbV}^{\lambda}_{(i-2,i-1)}\right)\widetilde{\mbV}^{\lambda}_{(i-1,i)}\left(\widetilde{\mbV}^{\lambda}_{(i-2,i-1)}\cdot \ldots \cdot \widetilde{\mbV}^{\lambda}_{(23)}\widetilde{\mbV}^{\lambda}_{(12)}\right)=A \widetilde{\mbV}^{\lambda}_{(i-1,i)}A^{T} $. We know that for every symmetric matrix $S$ and an arbitrary nonzero matrix $X$, their composition $XSX^{T}$ is also symmetric. Thanks to this, the swap operator $\widetilde{\mbV}^{\lambda}_{(1i)}$ is an symmetric matrix.}, so they could be written in a general form as:
\begin{equation}
\label{123}
\widetilde{\mbV}^{\lambda}_{(1i)}=\begin{bmatrix}
v_{11} & v_{12} & \ldots & v_{1k}\\
v_{12} & v_{22} & \ldots & v_{2k}\\
\vdots & \vdots & \ddots &\vdots \\
v_{1k} & v_{2k} & \ldots & v_{kk}
\end{bmatrix},
\end{equation}
where $k=\operatorname{dim}\mathcal{H}^{\mathcal{S}}_{\lambda}$. Now let us write a density matrix of a pure state $|\psi \rangle=(a_1,a_2,\ldots , a_k)^{T}$ with complex values (letter $C$ corresponds to the word 'complex'):
\begin{equation}
\label{12}
\rho_C^{\lambda}=\begin{pmatrix}
|a_1|^2 & a_1a_2 & \ldots & a_1a_k \\
\bar{a}_1\bar{a}_2 & |a_2|^2 & \ldots & a_2a_k \\
\vdots & \vdots & \ddots & \vdots \\
\bar{a}_1\bar{a}_k & \bar{a}_2\bar{a}_k & \ldots & |a_k|^2
\end{pmatrix}.
\end{equation}
Next, let us rewrite $\tr \left( \rho^{\lambda} \widetilde{\mbV}^{\lambda}_{(1i)} \right)$ from Lemma \ref{FF}, using (\ref{12}) and (\ref{123}), as follows:
\begin{equation}
\label{michal}
\begin{split}
\tr \left(\rho_C^{\lambda}\widetilde{\mbV}^{\lambda}_{(1i)} \right)&=\sum_{j=1}^k |a_j|^2v_{jj}+\sum_{j,s=1}^k\left(a_ja_s+\bar{a}_j\bar{a}_s \right)v_{js}=\\
&=\sum_{j=1}^k |a_j|^2v_{jj}+2\sum_{j,s=1}^k \operatorname{Re}\left(a_ja_s \right)v_{js}.
\end{split}
\end{equation}
We see that calculating the trace~\eqref{michal} with the operator $\rho_C^{\lambda}$ is equivalent to calculating the trace~\eqref{michal} with the following matrix $\rho_{R-C}^{\lambda}$:
\be
\label{piotr}
\rho^{\lambda}_{R-C} = \begin{bmatrix}
|a_1|^2 & \Re(a_1a_2) & \ldots & \Re(a_1a_k)\\
\Re(a_1a_2) & |a_2|^2 & \ldots & \Re(a_2a_k)\\
\vdots & \vdots & \ddots & \vdots \\
\Re(a_1a_k) & \Re(a_2a_k) & \ldots & |a_k|^2
\end{bmatrix},
\ee
where the new index $R-C$ means that we are left only with real numbers. We shall now show, that $\rho^{\lambda}_{R-C}$ is positive-semidefinite.

Let us denote the total phase in this case by $e^{\rmi \phi_j}$ for $j=1,\ldots,k$, so we are left with $\Re(e^{\rmi \phi_j}) = \Re(\cos \phi_j + \rmi \sin \phi_j) = \cos \phi_j$. We then obtain
\be
\label{piotrn}
\begin{split}
\rho^{\lambda}_{R-C} &= \begin{bmatrix}
|a_1|^2 & |a_1a_2| & \ldots & |a_1a_k|\\
|a_1a_2| & |a_2|^2 & \ldots & |a_2a_k|\\
\vdots & \vdots & \ddots & \vdots \\
|a_1a_k| & |a_2a_k| & \ldots & |a_k|^2
\end{bmatrix}\bullet \\
&\begin{bmatrix}
1 & \cos(\phi_1-\phi_2) & \ldots & \cos(\phi_1-\phi_k)\\
\cos(\phi_1-\phi_2) &1 & \ldots & \cos(\phi_2-\phi_k)\\
\vdots & \vdots & \ddots & \vdots \\
\cos(\phi_1-\phi_k) & \cos(\phi_2-\phi_k) & \ldots & 1
\end{bmatrix}=\\
&=A\bullet C,
\end{split}
\ee
where $\phi_j$ are phases connected with a given $|a_j|$ and by $\bullet$ we denote a Hadamard product of two matrices.
Now our goal is to show that the matrices $A$ and  $C$ are positive-semidefinite.
First let us define a set of vectors:
\be
|\omega_j \rangle=\cos \phi_j |0 \rangle+\sin \phi_j |1 \rangle, \quad \text{for} \quad j=1,2,\ldots,k.
\ee
It is easy to show that  matrix $C$ can be rewritten in terms of vectors $|\omega_l\rangle$ i.e. $C_{ij}= \langle \omega_i|\omega_j \rangle $ for $i,j=1,2,\ldots,k$. Thanks to this operation we can conclude that $C$ is a square Gramian matrix. From~\cite{Horn1985-book-mat} we know that every square Gramian matrix is positive-semidefinite.

The matrix $A$ is of the form $|\phi\>\<\phi|$, with $|\phi\rangle=\sum_i |a_i| |i\>$, so it is positive-semidefinite. Now using Fact~\ref{Had} we obtain that our operator $\rho_{R-C}^{\lambda}$ is also positive-semidefinite.

Because $\rho_{R-C}^{\lambda}$ is real and symmetric we get from Fact~\ref{RS} that the matrix $\rho^{\lambda}_{R-C}$ possesses real eigenvectors, so indeed it is a mixture of real pure states.

%Since $C$ and $A$ are positive-semidefinite matrices then thanks to fact~\ref{Had} (see Appendix) an operator $\rho^{\lambda}_{R-C}$ from Eq. \eqref{piotrn}
\end{proof}

\section{Case study: a region for triples of fidelities}
\label{sec:case}
\subsection{Partitions and transpositions}
In our case-study example we have four particles %(or equivalently: we have four-partite state), so the corresponding dimension is
($n=4$) which means that allowed partitions are $\lambda_1=(4), \lambda_2=(3,1), \lambda_3=(2,2), \lambda_4=(2,1,1)$ and $\lambda_5=(1,1,1,1)$. Because we want to consider only qubits here, so in our case $d=2$, and hence $\lambda$ runs over binary partitions only or, equivalently,
over Young diagrams with two rows (for other diagrams the multiplicity space $\mathcal{H}^{U}_{\lambda}$ becomes zero-dimensional). Thanks to this, we are left only with $\lambda_1=(4), \lambda_2=(3,1), \lambda_3=(2,2)$.
For the partition $\lambda_2=(3,1)$  unitary representations of transpositions $\operatorname{T}(1,2)$, $\operatorname{T}(1,3)$, $\operatorname{T}(1,4)$, which we shall need, are \cite{Thomas1980}:
\begin{equation}
\begin{split}
\label{opV}
\widetilde{\mbV}_{(12)}^{\lambda_2}&=
\begin{bmatrix}
1 & 0& 0 \\ 0&1&0\\ 0&0&-1
\end{bmatrix}, \ \
\widetilde{\mbV}_{(13)}^{\lambda_2}=\begin{bmatrix}
1&0&0\\ 0&-\frac{1}{2}&-\frac{\sqrt{3}}{2}\\ 0&-\frac{\sqrt{3}}{2}&\frac{1}{2}
\end{bmatrix}, \\
\widetilde{\mbV}_{(14)}^{\lambda_2}&=\begin{bmatrix}
-\frac{1}{3}&-\frac{\sqrt{2}}{3}&-\frac{\sqrt{6}}{3}\\-\frac{\sqrt{2}}{3}&\frac{5}{6}&-\frac{\sqrt{3}}{6}\\ -\frac{\sqrt{6}}{3}&-\frac{\sqrt{3}}{6}&\frac{1}{2}
\end{bmatrix}.
\end{split}
\end{equation}
For the partition $\lambda_3=(2,2)$ unitary transposition representations $\operatorname{T}(1,2)$, $\operatorname{T}(1,3)$, $\operatorname{T}(1,4)$, which we shall need, are \cite{Thomas1980}:
\begin{equation}
\begin{split}
\widetilde{\mbV}_{(12)}^{\lambda_3}&=
\begin{bmatrix}
1 & 0\\
0 & -1
\end{bmatrix},\ \
\widetilde{\mbV}_{(13)}^{\lambda_3}=\begin{bmatrix}
-\frac{1}{2} & -\frac{\sqrt{3}}{2}\\
-\frac{\sqrt{3}}{2} & \frac{1}{2}
\end{bmatrix},\\
\widetilde{\mbV}_{(14)}^{\lambda_3}&=
\begin{bmatrix}
-\frac{1}{2} & \frac{\sqrt{3}}{2}\\
\frac{\sqrt{3}}{2} &  \frac{1}{2}
\end{bmatrix},
\end{split}
\end{equation}
%where $\widetilde{\mbV}^{\lambda}_{(1i)}$ corresponds to a given pair initial state - clone (for example $13$).
Note that for the partition $\lambda_1$ we have a trivial representation, so it is not reported here, explanation is provided later.
%\subsection{Initial state}
\subsection{Partial result: fidelity regions for each partition}
\label{partial}
According to Lemma \ref{affinite}, all extreme points are just the convex hull of the figure from Figure \ref{fig:plot}. It allows us to use pure states only. In our situation, we use states which are of the form $|\psi^{(2,2)} \> = \begin{bmatrix}
a_1\\
a_2
\end{bmatrix}$ and $|\psi^{(3,1)} \> = \begin{bmatrix}
a_1\\
a_2\\
a_3
\end{bmatrix}$, $a_i \in \mathcal{R}$, so they are real pure states (see Lemma \ref{real}). Each of this state generates a pure state $\rho^{\lambda}$ of the form $\rho^{(2,2)} = \begin{bmatrix}
a_1^2 & a_1a_2\\
a_1a_2&  a_2^2
\end{bmatrix}$, where $a_1^2+a_2^2=1$, for the partition $\lambda = (2,2)$, and $\rho^{(3,1)} =\begin{bmatrix}
a_1^2 & a_1a_2 & a_1a_3\\
a_1a_2 & a_2^2 & a_2a_3\\
a_1a_3&  a_2a_3 & a_3^2
\end{bmatrix}$, where $a_1^2+a_2^2+a_3^2=1$, for $\lambda = (3,1)$ respectively. Inserting $\rho^{(2,2)}$ and $\rho^{(3,1)}$ into Equation \eqref{FFF} from Lemma \ref{FF}, we obtain the plot\footnote[2]{All plots are obtained using $Mathematica$ software.} in Figure \ref{fig:plot}, with values $F_{12}$, $F_{13}$ and $F_{14}$ on axes.
\begin{figure}[p]
\begin{center}$
\begin{array}{c}
\includegraphics[width=0.45\textwidth, height=0.40\textheight]{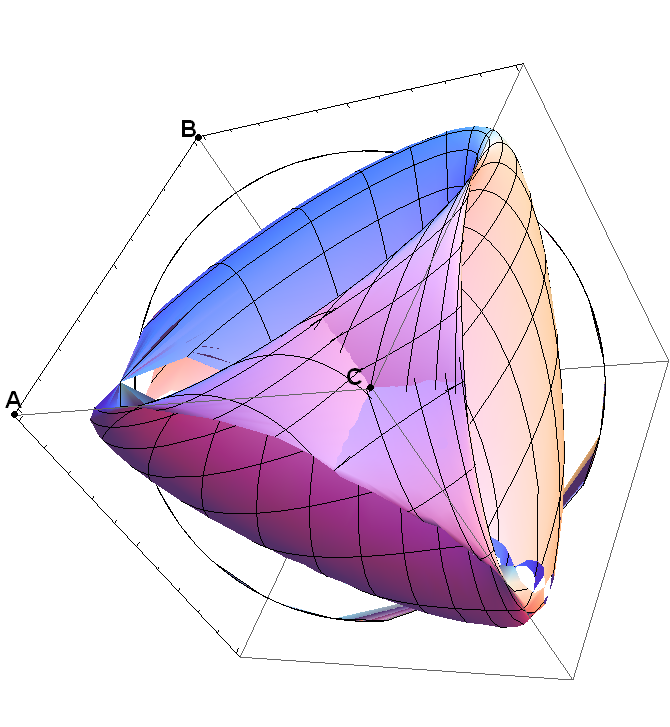} \\
\includegraphics[width=0.45\textwidth, height=0.40\textheight]{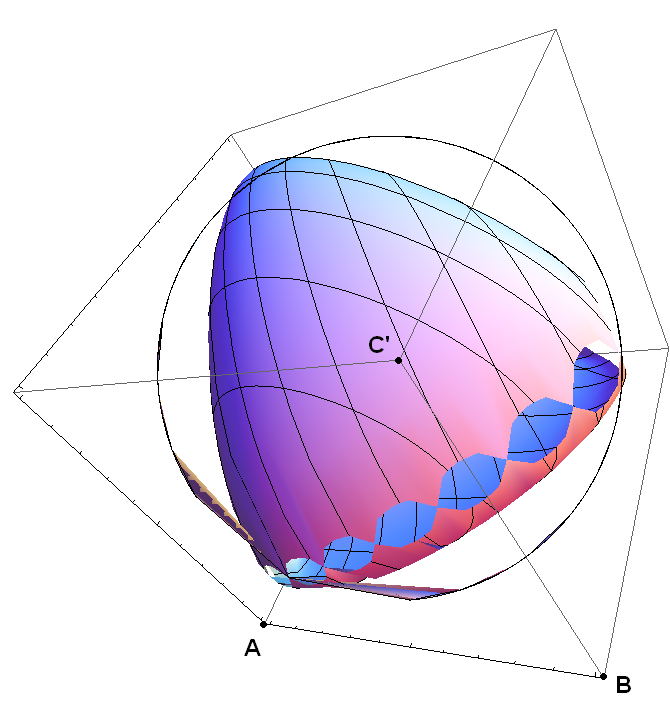}
\end{array}$
\end{center}
\caption{Plot of $F_{1i} = \sum_{\lambda}F_{1i}^{\lambda}$ with $\rho^{(2,2)}$ and $\rho^{(3,1)}$. On axes values of $F_{12}$, $F_{13}$ and $F_{14}$ are reported. The ellipse corresponds to the partition $\lambda = (2,2)$, when the other figure to the partition $(3,1)$. On the ellipse, a small numerical error can be seen. On each plot three coordinates are presented from the set: $A = (0,1,0)$, $B = (1, 1, 0)$ and $C = (0, 0, 0)$, and $C^{'} = (1, 1, 1)$. Two different views of plot are presented.}
\label{fig:plot}
\end{figure}
Now we want to give an explanation, why the partition $\lambda = (4)$ is not included in figures.
It can be shown that in this case, all fidelities are equal to $0$, and since we do not want to 'blur' the main message of this Letter, this partition is excluded from Figure \ref{fig:plot}. But, strictly speaking, in the next step, the convex hull also with this point should be created. This should lead to a 'larger' possible region for fidelities. Of course, it is not a hard task from the computational point of view, since taking a convex hull of some point and some figure is quite an elementary operation.
\begin{remark}
Note that this does not mean that cloning fidelities are equal to $0$ and then that the partition $\lambda = (4)$ should be excluded because perfect universal anti-cloning is forbidden by the laws of quantum mechanics. In our case singlet fractions are equal to $0$ and fidelities are equal $\frac{1}{3}$ (see Eq. \eqref{mipa} and the comment in the paragraph below it).
\end{remark}
At the end of this paragraph we present also explicit formulas for fidelities for every
irrep labeled by partitions $\lambda_2$ and $\lambda_3$.
\begin{itemize}
\item Fidelities for the partition $\lambda_2=(2,2)$:
\be
\begin{split}
F_{12}^{\lambda_3}&=\frac{1}{2}\left(1-a_1^2+a_2^2\right),\\
F_{13}^{\lambda_3}&=\frac{1}{2}\left(1+\frac{a_1^2}{2}-\frac{a_2^2}{2}+\sqrt{3}a_1a_2\right),\\
F_{14}^{\lambda_3}&=\frac{1}{2}\left(1+\frac{a_1^2}{2}-\frac{a_2^2}{2}-\sqrt{3}a_1a_2 \right).
\end{split}
\ee
\item Fidelities for the partition $\lambda_3=(3,1)$:
\be \label{fS4}
\begin{split}
F_{12}^{\lambda_2}&=\frac{1}{2}\left(1-a_1^2+a_2^2+a_3^2\right),\\
F_{13}^{\lambda_2}&=\frac{1}{2}\left(1+\frac{a_1^2}{2}-\frac{a_2^2}{2}+a_3^2+\sqrt{3}a_1a_2\right),\\
F_{14}^{\lambda_2}&=\frac{1}{2}\left(1+\frac{a_1^2}{2}+\frac{5a_2^2}{6}-\frac{a_3^2}{3}+\frac{a_1a_2}{\sqrt{3}}+\frac{2\sqrt{2}a_2a_3}{3}-2\sqrt{\frac{2}{3}}a_1a_3\right).
\end{split}
\ee
\end{itemize}

\subsection{The main result: an allowed region for fidelities}
Since, we have two partitions $(2,2)$ and $(3,1)$ and corresponding pure states $\rho_{13}^{(2,2)}$ and $\rho_{13}^{(3,1)}$, we obtain two figures in Figure \ref{fig:plot}. But, we are interested in situation, where we can obtain a general answer to our question from Section \ref{sec:problem}, namely in situation where we have a mixture of both partitions: $\sum_{\lambda} p_{\lambda} F_{13}^{\lambda}$. To solve this task, we can construct a convex hull of figures from Figure \ref{fig:plot}. It is presented in Figure \ref{fig:ch}.
\begin{figure}[p]
\begin{center}$
\begin{array}{cc}
\includegraphics[width=0.45\textwidth, height=0.40\textheight]{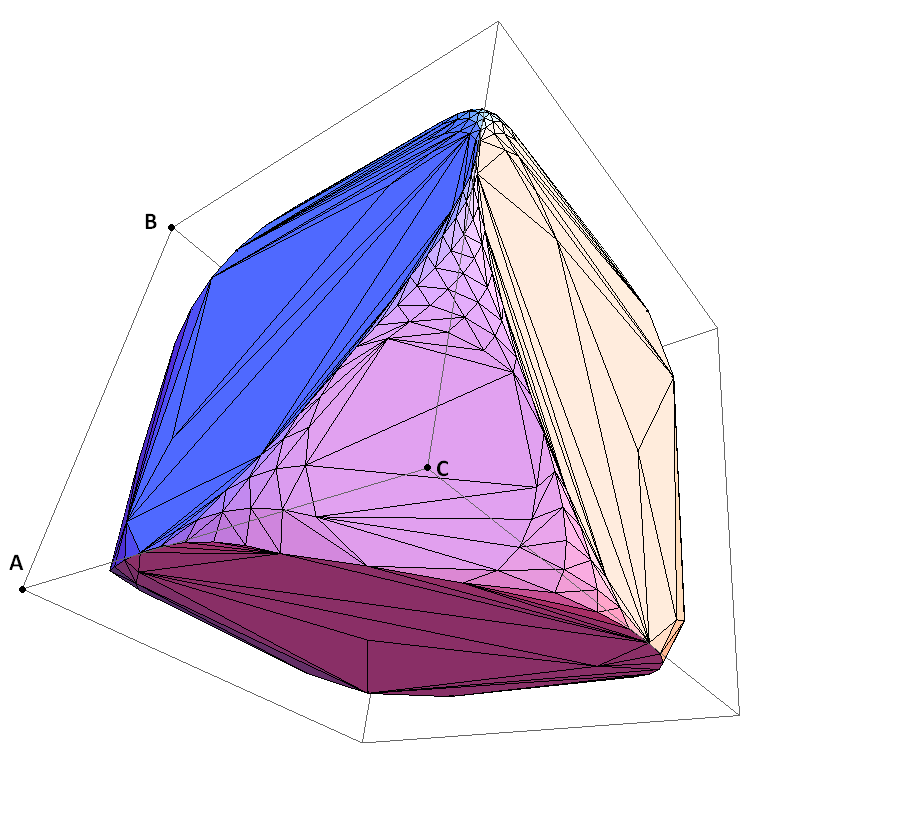} \\
\includegraphics[width=0.45\textwidth, height=0.40\textheight]{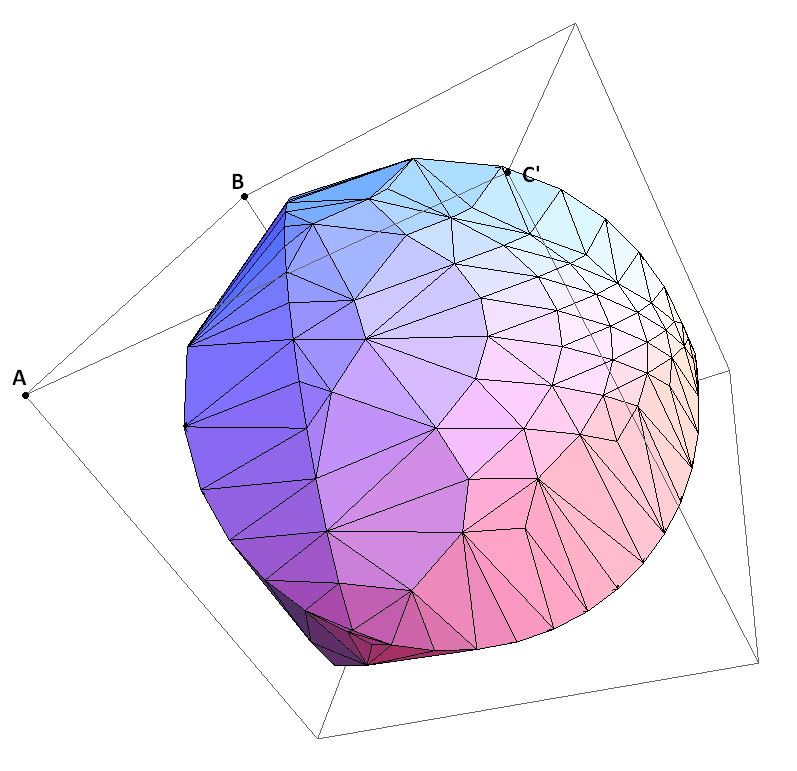}
\end{array}$
\end{center}
\caption{The convex hull of figures that corresponds to the partition $(2,2)$ and $(3,1)$, based on Figure \ref{fig:plot}. On axes values of $F_{12}$, $F_{13}$ and $F_{14}$ are reported. On each plot, three coordinates are presented from the set: $A = (0,1,0)$, $B = (1, 1, 0)$ and $C = (0, 0, 0)$, and $C^{'} = (1, 1, 1)$. Two different views are presented.}
\label{fig:ch}
\end{figure}
One can also see that our Figures \ref{fig:plot} and \ref{fig:ch} are invariant under rotation of angle $2\pi/3$ along straight line $F_{12}=F_{13}=F_{14}$. This corresponds to three conjugacy classes\footnote[3]{Two permutations $\pi$ and $\pi^{'}$ are conjugate iff
$\pi = \sigma \cdot \pi^{'} \cdot \sigma^{-1}$,
where $\sigma$ is also a permutation and $\sigma$, $\pi$, $\pi^{'}$ $\in$ $S_n$.}.
\begin{remark}
Because of the properties of the cloning map $\widetilde{\Lambda}$ (see Sec. \ref{sec:problem}) all possible convex mixtures of the partitions shown in Fig. \ref{fig:ch} are possible and produce a correct quantum cloner, i.e. a trace preserving completely positive map.
\end{remark}

\subsection{Comparison with other models}
As we mentioned before, this particular $UQCM$ - an asymmetric $1 \rightarrow 3$ cloner - was studied before in \cite{Fiurasek-cloning2005} and in \cite{Iblisdir-cloning2004}. It was shown there how to obtain an optimal region for this particular cloner. Plotting Eq. ($38$) for $d=2$ together with a constraint from Eq. ($32$) from \cite{Fiurasek-cloning2005} and then using Eq. \eqref{mipa} (from this work) to obtain fidelities, one can check that then the optimal cloner corresponds to a region for fidelities obtained for the partition $\lambda=(3,1)$ (but first some unknown at this moment constraint should be applied, leading to a cut-off of the plot of the partition $\lambda=(3,1)$, so it will correspond to the result from \cite{Fiurasek-cloning2005}). It is an important remark that one should keep in mind - in our work optimal cloners are not considered, instead, the whole possible region for a given $N-$tuple of fidelities is obtained that later somehow could be optimized by 'throwing away' some partitions and establishing some constrains. What is more, authors of the paper \cite{Fiurasek-cloning2005} also proved the optimality of the asymmetric $1 \rightarrow 2$ quantum cloner using their technique. This result is also in accordance with our results, one can check that using our method, by taking valid partitions from the symmetric group $S(3)$, the same result could be obtained. In general, expressions for possible fidelities in both cases (so our and from \cite{Fiurasek-cloning2005}) look quite similar (compare for example 'our' Eq. \eqref{fS4} with Eq. ($38$) from \cite{Fiurasek-cloning2005}), but because of the constraint from Eq. ($32$), the optimal fidelities could be obtained, not the full region like in our work, so it would be worthy to explore the direct connection between these two approaches in the future.

\subsection{Special case: symmetric cloning}
It can also be shown that in the case of symmetric cloning, the maximal possible value of the triple $(F_{12}, F_{13}, F_{14})$ is $\left( \frac{2}{3}, \frac{2}{3}, \frac{2}{3} \right)$, so it is in accordance with the formula obtained by Keyl et al. \cite{KW} and Werner \cite{Werner-cloning1998} (first using Equation \eqref{mipa}), namely we are able to find such a point $(F_{12}, F_{13}, F_{14})$ that corresponds to the case of the optimal symmetric fidelity $F$.
To find the maximal possible fidelity (for a triple $(F_{12}, F_{13}, F_{14})$) which corresponds to the case of the optimal symmetric cloning, one method is to find a plane equation which includes the ellipse (which corresponds to the partition $(2,2)$). It can be shown that it is given by the formula $F_{12}+F_{13}+F_{14}=\frac{5}{2}$. It can be obtained, for example, by taking coordinates of points from Figure \ref{fig:plot} and then calculating the plane equation according to the following well-known formula:
\be \label{plaszcz}
\left[\begin{array}{ccc}
x-x_1&y-y_1&z-z_1\\
x_2-x_1&y_2-y_1&z_2-z_1\\
x_3-x_2&y_3-y_1&z_3-z_1
\end{array}\right] = 0.
\ee
So, of course we have to identify $x$, $y$, $z$ with $F_{12}$, $F_{13}$ and $F_{14}$ first, and then insert to \eqref{plaszcz} values of three points $P_{1}(F_{12}, F_{13}, F_{14})$, $P_{2}(F_{12}, F_{13}, F_{14})$ and $P_{3}(F_{12}, F_{13}, F_{14})$ from the ellipse and solve equation which is then obtained.

\subsection{Applications}
\label{sec:application}
The above-mentioned approach allows us to reconstruct states from every subspace $\mathcal{H}_{\lambda}^{\mathcal{S}}$ which satisfies not only the symmetric case $(F_1=F_2=F_3)$ (like in a typical $QCM$) but also more general relations between fidelities. In this section we reconstruct states belonging to the subspaces corresponding to partitions $\lambda_1=(2,2), \lambda_2=(3,1)$ and satisfying the following formula
\begin{equation}
\label{max}
\mathop{\max}\limits_{F_1}\left(F_1+F_3=2F_2 \right).
\end{equation}
%\subsection{States reconstruction for partition $\lambda_1=(2,2)$}
%Let us consider following state on $\mathcal{H}^{\mathcal{S}}_{\lambda_1}$
%\begin{equation}
%\rho^{\lambda_1}=\begin{pmatrix}a^2 & ab \\ ab & b^2\end{pmatrix}
%\label{state1}
%\end{equation}
{\bf a) Partition $\lambda_1=(2,2)$}\\
In this case we have two constrains $F_1+F_3=2F_2$ and $a_1^2+a_2^2=1$, so after solving this system of equations we obtain four allowed pairs of solutions
\begin{equation}
\begin{split}
(a_1,a_2)_{(1)}&=\left(\sqrt{\frac{1}{2}+\frac{\sqrt{3}}{4}},-\frac{1}{2}\sqrt{2-\sqrt{3}}\right), \\
(a_1,a_2)_{(2)}&=\left(-\frac{1}{2}\sqrt{2-\sqrt{3}},\sqrt{\frac{1}{2}+\frac{\sqrt{3}}{4}}\right),\\
(a_1,a_2)_{(3)}&=\left(-\sqrt{\frac{1}{2}+\frac{\sqrt{3}}{4}}, -\frac{1}{2}\sqrt{2-\sqrt{3}}\right),\\
(a_1,a_2)_{(4)}&=\left(\frac{1}{2}\sqrt{2-\sqrt{3}},\sqrt{\frac{1}{2}+\frac{\sqrt{3}}{4}} \right).
\end{split}
\end{equation}
Fidelities corresponding to the above pairs are
\begin{equation}
\begin{split}
(F_1^{\lambda_1},F_2^{\lambda_1},F_3^{\lambda_1})_{(1)}&=\left(\frac{1}{4}(2-\sqrt{3}),\frac{1}{2},\frac{1}{4}(2+\sqrt{3})\right),\\ (F_1^{\lambda_1},F_2^{\lambda_1},F_3^{\lambda_1})_{(2)}&=\left(\frac{1}{4}(2+\sqrt{3}),\frac{1}{2},\frac{1}{4}(2-\sqrt{3})\right),\\
(F_1^{\lambda_1},F_2^{\lambda_1},F_3^{\lambda_1})_{(3)}&=\left(\frac{1}{4}(2-\sqrt{3}),\frac{1}{2},\frac{1}{4}(2+\sqrt{3})\right),\\ (F_1^{\lambda_1},F_2^{\lambda_1},F_3^{\lambda_1})_{(4)}&=\left(\frac{1}{4}(2+\sqrt{3}),\frac{1}{2},\frac{1}{4}(2-\sqrt{3})\right).
\end{split}
\end{equation}
We can see that the maximal fidelity $F_1^{\lambda_1}$ can be obtained for a pair $(2)$ and $(4)$, so our reconstructed states are the following
\begin{equation}
\begin{split}
\rho^{\lambda_1}_{(2)}&=\frac{1}{4}\begin{pmatrix}2-\sqrt{3} & -1 \\ -1 & 2+\sqrt{3} \end{pmatrix},\\
 \rho^{\lambda_1}_{(4)}&=\frac{1}{4}\begin{pmatrix}2-\sqrt{3} & 1 \\ 1 & 2+\sqrt{3} \end{pmatrix}.
\end{split}
\end{equation}
Now note something more general. Thanks to Lemma \ref{FF}, we have  $F_1^{\lambda_1}=a_2^2$ and then $a_1^2=1-a_2^2=1-F_1^{\lambda_1}$, so we can express every states  $\rho^{\lambda_1}\in \mathcal{H}_{\lambda_1}^{\mathcal{S}}$ in terms of the fidelity $F_1^{\lambda_1}$
\begin{equation}
\label{signs}
\rho^{\lambda_1}=\begin{pmatrix}1-F_1^{\lambda_1} & \pm \sqrt{F_1^{\lambda_1}(1-F_1^{\lambda_1})}\\ \pm \sqrt{F_1^{\lambda_1}(1-F_1^{\lambda_1} )} & F_1^{\lambda_1} \end{pmatrix}.
\end{equation}
For example, the sign $'+'$ in~\eqref{signs} corresponds to states obtained from pairs $(a_4,b_4)$ and $(a_3,b_3)$ while the sign $'-'$ corresponds to states obtain from pairs $(a_1,b_1)$, $(a_2,b_2)$.\\
{\bf b) Partition $\lambda_2=(3,1)$}\\
For the partition $\lambda_2=(3,1)$ we have a more complicated situation. Here we have three parameters $a_1,a_2,a_3$ but only two constrains $F_1+F_3=2F_2$ and $a_1^2+a_2^2+a_3^2=1$, so we first need to eliminate the parameter $a_1$ and then express the parameter $a_3$ as a function of $a_2$. Numerically, we find that the fidelity $F_1$ is maximal for the following values of parameters $a_1,a_2,a_3$
\begin{equation}
\label{param2}
\begin{split}
(a_1,a_2,a_3)_{(1)}&=\left(0.114, 0.318, 0.941 \right),\\
(a_1,a_2,a_3)_{(2)}&=\left(-0.114, -0.318, -0.941\right),\\
(a_1,a_2,a_3)_{(3)}&=\left(0.114, -0.318, -0.941\right),\\
(a_1,a_2,a_3)_{(4)}&=\left(-0.114, 0.318, 0.941\right).
\end{split}
\end{equation}
Corresponding states are
\begin{equation}
\begin{split}
\rho^{\lambda_2}_{(1)}&=\rho^{\lambda_2}_{(2)}=\begin{pmatrix} 0.013 & 0.036 & 0.107\\
0.036 & 0.101 & 0.299\\
0.107 &  0.299 & 0.886\end{pmatrix},\\
\rho^{\lambda_2}_{(3)}&=\rho^{\lambda_2}_{(4)}=\begin{pmatrix} 0.013 & -0.036 & -0.107\\
-0.036 & 0.101 & 0.299 \\
-0.107 &  0.299 & 0.886\end{pmatrix}.
\end{split}
\end{equation}
Note that fidelities in this case are $\left(F_1^{\lambda_2},F_2^{\lambda_2},F_3^{\lambda_2}\right)=\left(0.886, 0.556, 0.220 \right)$.
The numbers for our numerical data are obtained by Wolfram Mathematica Alpha \cite{math}. Most likely they are indeed optimal.

\section{Fidelities formulas for a $1 \rightarrow 4$ universal quantum cloner}
\label{sec:S5}

In this section we briefly report formulas for fidelities of a $1 \rightarrow 4$  $UQCM$ for every allowed irrep $\lambda$ for an $S(5)$ symmetric group in the case of qubits. In this particular example we are left with three partitions only i.e. $\lambda_1=(5)$, $\lambda_2=(4,1)$ and $\lambda_3=(3,2)$. For the partition $\lambda_2$ unitary representations of transpositions $\operatorname{T}(1,2), \operatorname{T}(1,3), \operatorname{T}(1,4), \operatorname{T}(1,5)$, which we shall need, are~\cite{Thomas1980}:
\be
\begin{split}
\widetilde{\mbV}_{(12)}^{\lambda_2}&=\begin{bmatrix}
1 & 0 & 0 & 0\\
0 & 1 & 0 & 0\\
0 & 0 & 1 & 0\\
0 & 0  & 0 & -1
\end{bmatrix},
\widetilde{\mbV}_{(13)}^{\lambda_2}=\begin{bmatrix}
1 & 0 & 0 & 0\\
0 & 1 & 0 & 0\\
0 & 0 & -\frac{1}{2} & -\frac{\sqrt{3}}{2}\\
0 & 0 & -\frac{\sqrt{3}}{2} & \frac{1}{2}
\end{bmatrix},\\
\widetilde{\mbV}_{(14)}^{\lambda_2}&=\begin{bmatrix}
1 & 0 & 0 & 0 \\
0 & -\frac{1}{3} & -\frac{\sqrt{2}}{3} & -\sqrt{\frac{2}{3}}\\
0 & -\frac{\sqrt{2}}{3} & \frac{5}{6} & -\frac{1}{2\sqrt{3}}\\
0 & -\sqrt{\frac{2}{3}} & -\frac{1}{2\sqrt{3}} & \frac{1}{2}
\end{bmatrix},\\
\widetilde{\mbV}_{(15)}^{\lambda_2}&=\begin{bmatrix}
-\frac{1}{4} & -\frac{1}{4}\sqrt{\frac{5}{3}} & -\frac{1}{2}\sqrt{\frac{5}{6}} & -\frac{1}{2}\sqrt{\frac{5}{2}}\\
-\frac{1}{4}\sqrt{\frac{5}{3}} & \frac{11}{12} & -\frac{1}{6\sqrt{2}} & -\frac{1}{2\sqrt{6}}\\
-\frac{1}{2}\sqrt{\frac{5}{6}} & -\frac{1}{6\sqrt{2}} & \frac{5}{6} & -\frac{1}{2\sqrt{3}}\\
-\frac{1}{2}\sqrt{\frac{5}{2}} & -\frac{1}{2\sqrt{6}} & -\frac{1}{2\sqrt{3}} & \frac{1}{2}
\end{bmatrix}.
\end{split}
\ee
Now, assuming that our pure state is of the form $|\psi^{(4,1)}\rangle=[a_1,a_2,a_3,a_4]^{\operatorname{T}}$, we obtain from Lemma~\ref{FF} the following formulas for fidelities:
\be
\begin{split}
F_{12}^{\lambda_2}&=\frac{1}{2}\left( 1-a_1^2-a_2^2-a_3^2+a_4^2\right),\\
F_{13}^{\lambda_2}&=\frac{1}{2}\left(1-a_1^2-a_2^2+\frac{a_3^2}{2}+\sqrt{3}a_3a_4-\frac{a_4^2}{2}\right),\\
F_{14}^{\lambda_2}&=\frac{1}{2}\left(1-a_1^2+\frac{a_2^2}{3}+\frac{2\sqrt{2}a_2a_3}{3}-\frac{5a_3^2}{6}+2\sqrt{\frac{2}{3}}a_2a_4+\right.\\
&\left. +\frac{a_3a_4}{\sqrt{3}}-\frac{a_4^2}{2}\right),\\
F_{15}^{\lambda_2}&=\frac{1}{2}\left(1+\frac{a_1^2}{4}+\frac{1}{2}\sqrt{\frac{5}{3}}a_1a_2-\frac{11a_2^2}{12}+\sqrt{\frac{5}{6}}a_1a_3+\right.\\
&\left.+ \frac{a_2a_3}{3\sqrt{2}}-\frac{5a_3^2}{6}+\sqrt{\frac{5}{6}}a_1a_4+\frac{a_2a_4}{\sqrt{6}}+\frac{a_3a_4}{\sqrt{3}}-\frac{a_4^2}{2}  \right),
\end{split}
\ee
with a normalization condition  $\sum_{i=1}^4a_i^2=1$.\\
For the partition $\lambda_3$ unitary representations of transpositions $\operatorname{T}(1,2), \operatorname{T}(1,3), \operatorname{T}(1,4), \operatorname{T}(1,5)$, which we shall need, are~\cite{Thomas1980}:
\be
\begin{split}
\widetilde{\mbV}_{(12)}^{\lambda_3}&=\begin{bmatrix}
1 & 0 & 0 & 0 & 0\\
0 & 1 & 0 & 0 & 0\\
0 & 0 & 1 & 0 & 0\\
0 & 0 & 0 & 1 & 0\\
0 & 0 & 0 & 0 & -1
\end{bmatrix},\\
\widetilde{\mbV}_{(13)}^{\lambda_3}&=\begin{bmatrix}
1 & 0 & 0 & 0 & 0\\
0 & -\frac{1}{2} & -\frac{\sqrt{3}}{2} & 0 & 0\\
0 & -\frac{\sqrt{3}}{2} & \frac{1}{2} & 0 & 0\\
0 & 0 & 0 & -\frac{1}{2} & \frac{\sqrt{3}}{2}\\
0 & 0 & 0 & -\frac{\sqrt{3}}{2} & \frac{1}{2}
\end{bmatrix},\\
\widetilde{\mbV}_{(14)}^{\lambda_3}&=\begin{bmatrix}
-\frac{1}{3} & -\frac{\sqrt{2}}{3} & -\sqrt{\frac{2}{3}} & 0 & 0\\
-\frac{\sqrt{2}}{3} & \frac{5}{6} & -\frac{1}{2\sqrt{3}} & 0 & 0\\
-\sqrt{\frac{2}{3}} & -\frac{1}{2\sqrt{3}} & \frac{1}{2} & 0 & 0\\
0 & 0 & 0 & -\frac{1}{2} & \frac{\sqrt{3}}{2}\\
0 & 0 & 0 & \frac{\sqrt{3}}{2} & \frac{1}{2}
\end{bmatrix},\\
\widetilde{\mbV}_{(15)}^{\lambda_3}&=\begin{bmatrix}
-\frac{1}{3} & \frac{1}{3\sqrt{2}} & \frac{1}{\sqrt{6}} & -\frac{1}{\sqrt{6}} & -\frac{1}{\sqrt{2}}\\
\frac{1}{3\sqrt{2}} & -\frac{1}{6} & \frac{1}{\sqrt{3}} & -\frac{1}{\sqrt{3}} & \frac{1}{2}\\
\frac{1}{\sqrt{6}} & \frac{1}{\sqrt{3}} & \frac{1}{2} & \frac{1}{2} & 0\\
-\frac{1}{\sqrt{6}} & -\frac{1}{\sqrt{3}} & \frac{1}{2} & \frac{1}{2} & 0\\
-\frac{1}{\sqrt{2}} & \frac{1}{2} & 0 & 0 & \frac{1}{2}
\end{bmatrix}.
\end{split}
\ee
Assuming that our pure state in this partition has a form $|\psi^{(3,2)}\rangle=[a_1,a_2,a_3,a_4,a_5]^{\operatorname{T}}$ we obtain from Lemma~\ref{FF} the following formulas for fidelities:
\be
\begin{split}
F_{12}^{\lambda_3}&=\frac{1}{2}\left(1-a_1^2-a_2^2-a_3^2-a_4^2+a_5^2\right),\\
F_{13}^{\lambda_3}&=\frac{1}{2}\left(1-a_1^2+\frac{a_2^2}{2}+\sqrt{3}a_2a_3-\frac{a_3^2}{2}+\frac{a_4^2}{2}+\right. \\
&\left. +\sqrt{3}a_4a_5-\frac{a_5^2}{2}\right),\\
F_{14}^{\lambda_3}&=\frac{1}{2}\left(1+\frac{a_1^2}{3}+\frac{2\sqrt{2}a_1a_2}{3}-\frac{5a_2^2}{6}+2\sqrt{\frac{2}{3}}a_1a_3+\right.\\
&\left.+\frac{a_2a_3}{\sqrt{3}}-\frac{a_3^2}{2}+\frac{a_4^2}{2}-\sqrt{3}a_4a_5-\frac{a_5^2}{2}\right),\\
F_{15}^{\lambda_3}&=\frac{1}{2}\left(1+\frac{a_1^2}{3}-\frac{\sqrt{2}a_1a_2}{3}+\frac{a_2^2}{6}-\sqrt{\frac{2}{3}}a_1a_3-\frac{2a_2a_3}{\sqrt{3}}-\right.\\
&\left.+\frac{a_3^2}{2}+\sqrt{\frac{2}{3}}a_1a_4+\frac{2a_2a_4}{\sqrt{3}}-a_3a_4-\frac{a_4^2}{2}+\sqrt{2}a_1a_5-\right.\\
&\left.+a_2a_5-\frac{a_5^2}{2} \right),
\end{split}
\ee
with a normalization condition $\sum_{i=1}^5a_i^2=1$. After taking the convex hull, the allowed region for fidelities could be obtained also for this quantum cloner. Of course, the state reconstruction technique from Section \ref{sec:application} could also be used in this case. At the end of this section, let us note that here we also do not consider partition $\lambda_1$ (see Section~\ref{partial}).

\section{Conclusions}
We have shown that by using representation theory, especially Young diagrams, action of the universal $1 \rightarrow N$ quantum cloning machine could be described. The method of irreps is quite powerful, because it allows us to decompose (usually big) Hilbert space into blocks (linked with a given partition $\lambda$) of smaller dimensions which, of course, are easier to deal with. For example, in our case-study example $1 \rightarrow 3$ $UQCM$, the Hilbert space $\mathbb{C}^{16}$ is decomposed into blocks of dimension $1$, $2$ and $3$ respectively. What is more, using our model, fidelity expressions are quite easy to obtain, one only needs to know representations of all possible irreps for a given symmetric group.

We have also shown that the convex hull could be made to obtain full knowledge about our model, in the sense that it gives raise to the full possible range of fidelities. After careful studies, also optimal $UQCM$ could be found.  What is more, we have shown that by restricting ones attention only to real pure states in each of the block, the full answer to our initial question is obtained. We point out that our $UQCM$ gives correct results for the case of the symmetric cloning and we have also proved that the optimal value, in the symmetric case, could be obtained. What is more important, our approach allows to reconstruct any given state connected to a given $N-$tuple of fidelities.

Of course, in future, it would be interesting to extend our method in such a way that also qudits could be described. What is more, it would be interesting to search for method that allows to obtain the optimal $N$-tuple of fidelities - comparison of our results with these obtained in \cite{Fiurasek-cloning2005} may suggest that the optimal fidelities always correspond to only one partition (with some cut-offs) $\lambda$, since for cases $1 \rightarrow 2$ and $1 \rightarrow 3$ we have that in the first case, the optimal fidelity region corresponds to $\lambda = (2,1)$, in the second one to $\lambda = (3,1)$. Also it is worthy to explore the direct relation between these techniques, so namely 'our' technique and this reported in \cite{Fiurasek-cloning2005}.

{\it Note added}: After the completion of this paper we became aware that similar results for a $1 \rightarrow 3$ universal quantum cloner have been reported in \cite{Yu2010-cloning}.

{\bf Acknowledgments:} We would like to thank Pawe{\l} Horodecki for many valuables discussions and comments on this Letter. M. S. also would like to thank Piotr Migda{\l} for discussions. Last but not least, the anonymous referee's comments are acknowledged. P. \'C., M. H. and M. S. are supported by Polish Ministry of Science and Higher Education grant No. N202 231937. M. S. is also supported by the International PhD Project "Physics of future quantum-based information technologies": grant MPD/2009-3/4 from Foundation for Polish Science. M.H. is also supported by EC IP Q-ESSENCE and ERC grant QOLAPS. Part of this work was done in National Quantum Information Centre of Gda\'nsk.
\section{Appendix}
To prove our results we need the following definitions and lemmas:
\begin{definition}
A set $Z$ is convex iff \cite{hand}:
$\forall_{X,Y \ \in \ Z} \ \ \forall_{0 \leq \mu \leq 1} \ \ \mu X + (1 - \mu) Y \in Z$.
\end{definition}
\begin{definition}
$L(\cdot): x \rightarrow y$ is affine map iff \label{xxx}
$\exists_a \forall_x L(x) = \widetilde{L}(x) + a$,
where $\widetilde{L}$ is linear.
\end{definition}
Now let us present two useful lemmas
\begin{lemma} \label{lemma:af} Suppose that $y$, $y^{'}$ $\in$ $L(\Omega)$, then:
	\be \alpha y + (1 - \alpha) y^{'} \in L(\Omega) \label{lew} \ee (it means that $L(\Omega)$ is a convex set). \end{lemma}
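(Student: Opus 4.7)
The plan is to unfold the definitions and push the convex combination through the affine map $L$. First, I would take $y,y'\in L(\Omega)$ and pick preimages $x,x'\in\Omega$ with $y=L(x)$ and $y'=L(x')$; this is simply what membership in $L(\Omega)$ means. The desired conclusion $\alpha y+(1-\alpha)y'\in L(\Omega)$ for $0\le\alpha\le 1$ will then follow if I can exhibit a single element of $\Omega$ that $L$ maps to $\alpha y+(1-\alpha)y'$, because $\Omega$ is convex and so contains every such combination of $x$ and $x'$.

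Next, I would invoke Definition \ref{xxx} and write $L(\cdot)=\widetilde{L}(\cdot)+a$ for a linear map $\widetilde{L}$ and a fixed vector $a$. A one-line calculation then gives
\[
\alpha y+(1-\alpha)y'=\alpha\bigl(\widetilde{L}(x)+a\bigr)+(1-\alpha)\bigl(\widetilde{L}(x')+a\bigr)=\widetilde{L}\bigl(\alpha x+(1-\alpha)x'\bigr)+a,
\]
where I use linearity of $\widetilde{L}$ and the fact that the weights $\alpha$ and $1-\alpha$ on the translation $a$ sum to $1$. The right-hand side is precisely $L\bigl(\alpha x+(1-\alpha)x'\bigr)$.

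Finally, I would appeal to the hypothesis that $\Omega$ is convex: since $x,x'\in\Omega$ and $0\le\alpha\le 1$, the point $\alpha x+(1-\alpha)x'$ belongs to $\Omega$, so its image under $L$ belongs to $L(\Omega)$. Combined with the displayed identity, this yields $\alpha y+(1-\alpha)y'\in L(\Omega)$, which is the claim.

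There is no real obstacle in this argument; the only point worth being careful about is the bookkeeping of the translation vector $a$. One must verify that the two copies of $a$ coming from $y$ and $y'$ collapse back to a single $a$ under the convex combination (rather than to $(2\alpha-1)a$ or similar), which is exactly what uses $\alpha+(1-\alpha)=1$ and is precisely why the affine (as opposed to merely linear) structure of $L$ suffices.
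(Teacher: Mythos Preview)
Your argument is correct and follows the same route as the paper: write $L=\widetilde{L}+a$, push the convex combination through using linearity of $\widetilde{L}$ and $\alpha+(1-\alpha)=1$, and conclude via convexity of $\Omega$. The paper additionally writes $x=\sum_i p_i x_i$ and $x'=\sum_i q_i x'_i$ as combinations of extreme points of $\Omega$, but this decomposition plays no role in the computation and your more direct version is cleaner.
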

	\begin{proof}
	Let us write:
	\be
       \label{zew}
\begin{split}
	L(x) &= y = L(\sum_i p_i x_i) = \sum_i p_i\widetilde{L}(x_i) + a,  \\
	L(x^{'}) &= y^{'} = L(\sum_i q_i x^{'}_i) = \sum_i q_i\widetilde{L}(x^{'}_i) + a,
\end{split}
\ee
where we use the fact that $L$ is affine and $x = \sum_i p_i x_i$, $x^{'} = \sum_i q_i x^{'}$ for some $x_i, x^{'}_i \in E(\Omega)$. Inserting \eqref{zew} into \eqref{lew} we get:
\be
\begin{split}
&\alpha (\sum_i p_i\widetilde{L}(x_i) + a) + (1- \alpha)(\sum_i q_i\widetilde{L}(x^{'}_i) + a) \\
&= \alpha a + (1 - \alpha)a + \alpha \sum_i \widetilde{L}(p_i x_i) + (1 - \alpha)\sum_i \widetilde{L}(q_i x^{'}_i)  \\
&= a + \widetilde{L}(\alpha x + (1 - \alpha) x^{'}) = L(\alpha x + (1 - \alpha) x^{'}) \in L(\Omega).
\end{split}
\ee
This ends the proof of lemma.
\end{proof}
\begin{lemma}
\label{affinite}
Suppose that we have an affine map $L: \ X \rightarrow Y$. If by $\Omega$ we denote a convex subset of $X$, where $X$ is a finite dimensional space and by $E(\Omega)$ a set of extreme points of $\Omega$, then $L(\Omega)$ is a convex set and $L(E(\Omega))$ can reproduce set $L(\Omega)$ after taking the convex hull, i.e.:
\be L(\Omega) = \operatorname{conv} \ L(E(\Omega)). \ee
\end{lemma}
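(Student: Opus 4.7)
The plan is to split the statement into its two assertions and attack each separately, leaning on the already-proved Lemma~\ref{lemma:af} for convexity and on the finite-dimensional Minkowski (Krein--Milman) theorem for the representation part.

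First I would dispose of the convexity claim. Given $y_1, y_2 \in L(\Omega)$, there exist $x_1, x_2 \in \Omega$ with $L(x_i) = y_i$. Since $\Omega$ is convex, $\alpha x_1 + (1-\alpha) x_2 \in \Omega$ for every $\alpha \in [0,1]$. The affinity of $L$ means that $L$ preserves convex combinations, so
\be
L(\alpha x_1 + (1-\alpha) x_2) = \alpha L(x_1) + (1-\alpha) L(x_2) = \alpha y_1 + (1-\alpha) y_2 \in L(\Omega),
\ee
which is exactly the content of Lemma~\ref{lemma:af}; I would simply cite it rather than redo the computation.

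Next I would establish the set equality by proving both inclusions. For $L(\Omega) \subseteq \operatorname{conv} L(E(\Omega))$ I invoke the finite-dimensional Minkowski theorem: since $X$ is finite-dimensional and $\Omega$ is a (compact) convex set, every $x \in \Omega$ admits a representation $x = \sum_i p_i x_i$ with $x_i \in E(\Omega)$ and $p_i \geq 0$, $\sum_i p_i = 1$. Using $L(x) = \widetilde{L}(x) + a$ and $\sum_i p_i = 1$, one checks that
\be
L(x) = \widetilde{L}\!\left(\sum_i p_i x_i\right) + a = \sum_i p_i\,\widetilde{L}(x_i) + \sum_i p_i\, a = \sum_i p_i\, L(x_i),
\ee
so $L(x)$ is a convex combination of elements of $L(E(\Omega))$. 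The reverse inclusion $\operatorname{conv} L(E(\Omega)) \subseteq L(\Omega)$ is then immediate: any $y = \sum_i \alpha_i L(x_i)$ with $x_i \in E(\Omega) \subseteq \Omega$ equals $L\!\left(\sum_i \alpha_i x_i\right)$ by the same affine-combination identity, and $\sum_i \alpha_i x_i \in \Omega$ by the convexity of $\Omega$.

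The only substantive point is the use of Minkowski's theorem, which requires $\Omega$ to be compact (or at least that every element of $\Omega$ be a convex combination of extreme points). In the intended application $\Omega$ is the set of density matrices of a given form, which is compact in a finite-dimensional space, so this is fine; I would note this hypothesis explicitly. No further obstacles arise, because once convex combinations are preserved by $L$ the two inclusions are symmetric one-liners.
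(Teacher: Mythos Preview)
Your proof is correct and follows the same outline as the paper's: convexity of $L(\Omega)$ via Lemma~\ref{lemma:af}, together with $E(\Omega)\subseteq\Omega$ for the inclusion $\operatorname{conv} L(E(\Omega))\subseteq L(\Omega)$. In fact your argument is more complete than the paper's very terse proof, which only spells out that one inclusion and leaves the other direction (your explicit invocation of Minkowski's theorem and the accompanying compactness caveat) implicit.
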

\begin{proof}
We can write $L(E(\Omega)) \subseteq L(\Omega)$, because we know that $E(\Omega) \subseteq \Omega$. This together with Lemma (\ref{lemma:af}) ends our proof.
\end{proof}

\begin{fact}
\label{Had}
The Hadamard product of two positive-definite matrices is again positive-definite \cite{Horn1985-book-mat}.
\end{fact}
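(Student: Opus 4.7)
The plan is to prove this Schur product theorem by exploiting a rank-one decomposition of one of the matrices and reducing the Hadamard product to a conjugation by diagonal matrices. Let $A,B$ be positive definite. Since $A\succ 0$, I can write $A=CC^{*}$ with $C$ invertible (equivalently, take the spectral decomposition $A=\sum_{k}\lambda_{k}u_{k}u_{k}^{*}$ with all $\lambda_{k}>0$). Substituting $A_{ij}=\sum_{k}C_{ik}\overline{C_{jk}}$ into the quadratic form $x^{*}(A\circ B)x$ for an arbitrary $x\in\mathbb{C}^{n}$, and introducing the auxiliary vectors $y^{(k)}$ with components $y^{(k)}_{i}=\overline{C_{ik}}\,x_{i}$, the cross sum collapses to $x^{*}(A\circ B)x=\sum_{k}(y^{(k)})^{*}B\,y^{(k)}$. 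Because $B\succeq 0$, each summand is non-negative, so $A\circ B\succeq 0$ follows at once. Equivalently one can observe $(uu^{*})\circ B=\diag(u)\,B\,\diag(u)^{*}$ and assemble $A\circ B$ as a non-negative combination of such conjugations of $B$.

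The harder step is upgrading positive semi-definiteness to strict positive-definiteness. For this I would argue by contraposition: assume $x\neq 0$ but $x^{*}(A\circ B)x=0$. Using $B\succ 0$, every summand vanishing forces $y^{(k)}=0$ for all $k$, which in turn means $\overline{C_{ik}}\,x_{i}=0$ for every pair $(i,k)$. Whenever $x_{i}\neq 0$, the entire $i$-th row of $C$ must then vanish, contradicting the invertibility of $C$ that is guaranteed by $A\succ 0$. Hence $x=0$, which proves $A\circ B\succ 0$.

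The main obstacle in this route is precisely the strict-positivity bookkeeping just sketched: the semi-definite part is essentially a one-line factorization, but the strict part requires the non-degeneracy of both $A$ and $B$ to be propagated simultaneously via the auxiliary vectors. A shorter but less elementary alternative would be to identify $A\circ B$ as the principal submatrix of the Kronecker product $A\otimes B$ indexed by the diagonal pairs $\{(i,i)\}$; this yields the semi-definite conclusion immediately from the fact that principal submatrices of positive definite matrices are positive definite, but verifying the strict case still ultimately reduces to the same row-invertibility argument made above.
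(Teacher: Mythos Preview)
Your argument is correct and complete. However, the paper does not actually supply a proof of this fact at all: it merely states the Schur product theorem and cites Horn and Johnson's textbook \cite{Horn1985-book-mat}. So there is no ``paper's proof'' to compare against; you have furnished what the authors deliberately outsourced.

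Two minor remarks. First, the paper's sole application of this fact (in the proof of Lemma~\ref{real}) involves positive-\emph{semi}definite matrices --- the matrix $A$ there is a rank-one projector and $C$ is only a Gramian --- so only the content of your first paragraph is actually needed for the paper's purposes, and your careful strict-positivity upgrade, while a correct proof of the statement as literally written, goes beyond what is used. Second, your closing remark slightly undersells the Kronecker-product alternative: a principal submatrix of a strictly positive-definite matrix is itself strictly positive-definite (just restrict the quadratic form to the relevant coordinate subspace and extend by zeros), so that route in fact delivers the strict conclusion in one step, without having to revisit the row-invertibility of $C$.
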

\begin{fact}
\label{RS}
If $A$ is a real symmetric matrix, then all of its eigenvalues are real, and eigenvectors can always be chosen to be real.
\end{fact}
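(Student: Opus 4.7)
The plan is to establish the two claims sequentially: first that every eigenvalue of $A$ is real, and then that within each eigenspace one can choose eigenvectors with real entries. Both rest on the single observation that a real symmetric matrix is Hermitian, i.e.\ $A^{*} = \bar{A}^{T} = A^{T} = A$.

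For the first claim I would take an arbitrary eigenpair $(\lambda, \vec{v})$ with $\lambda \in \C$ and $\vec{v} \in \C^{n} \setminus \{0\}$, form the Hermitian form $\vec{v}^{*} A \vec{v}$, and compute it in two ways: once using $A\vec{v} = \lambda \vec{v}$, and once using the conjugate-transposed identity $\vec{v}^{*}A = \bar{\lambda}\,\vec{v}^{*}$ (which holds precisely because $A^{*}=A$). Comparing the results gives
\begin{equation}
\lambda\,\vec{v}^{*}\vec{v} \;=\; \vec{v}^{*} A \vec{v} \;=\; \bar{\lambda}\,\vec{v}^{*}\vec{v},
\end{equation}
and since $\vec{v}^{*}\vec{v} = \|\vec{v}\|^{2} > 0$, this forces $\lambda = \bar{\lambda}$, hence $\lambda \in \R$.

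For the second claim, since $\lambda$ is now known to be real and $A$ has real entries, the matrix $A - \lambda I$ is real. Splitting any complex $\vec{v} \in \ker(A-\lambda I)$ into real and imaginary parts $\vec{v} = \vec{u} + i\vec{w}$ and equating real and imaginary components of $(A - \lambda I)\vec{v} = 0$ yields $(A - \lambda I)\vec{u} = 0 = (A - \lambda I)\vec{w}$, so both $\vec{u}, \vec{w}$ are real eigenvectors for $\lambda$, and at least one is nonzero. Equivalently, the $\lambda$-eigenspace in $\C^{n}$ is the complexification of its real counterpart, so a real basis for it exists (concretely, by row-reducing $A - \lambda I$ over $\R$).

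The step I expect to require the most care --- and the reason this fact is usually packaged as the real spectral theorem --- is the strengthening from ``each eigenvector can be chosen real'' to ``an orthonormal real eigenbasis of $\R^{n}$ exists''. That upgrade uses orthogonality of eigenspaces for distinct eigenvalues, a one-line consequence of symmetry via $\lambda_{1}\langle \vec{v}_{1}, \vec{v}_{2}\rangle = \langle A\vec{v}_{1},\vec{v}_{2}\rangle = \langle \vec{v}_{1}, A\vec{v}_{2}\rangle = \lambda_{2}\langle \vec{v}_{1},\vec{v}_{2}\rangle$, followed by Gram--Schmidt within each real eigenspace and a dimension count guaranteeing that the characteristic polynomial splits over $\R$. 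For the invocation of this fact in Lemma~\ref{real} --- decomposing the real symmetric matrix $\rho^{\lambda}_{R-C}$ as a mixture of real pure states --- only the weaker form established above is strictly required.
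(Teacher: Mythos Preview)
Your proposal is correct and mirrors the paper's own argument for the second claim almost verbatim: split $v=\Re(v)+i\Im(v)$, observe that both parts lie in the real $\lambda$-eigenspace, and note that at least one is nonzero. The paper omits the proof of the first claim as ``obvious,'' whereas you supply the standard Hermitian-form computation, and your additional remarks on the full real spectral theorem go beyond what the paper states or needs.
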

\begin{proof}
Here, we only prove the second part of Fact \ref{RS} since the first one is obvious. Let $\lambda$ be a real eigenvalue of the matrix $A$ with an associated eigenvector $v$. Let us write $v$ as $v = \Re(v) + i\Im(v)$, so that $A(\Re(v) + \Im(v)) = \lambda(\Re(v) + \Im(v))$. It implies that $A\Re(v) = \lambda\Re(v)$ and $A\Im(v) = \lambda\Im(v)$. Now, if $\Re(v) \neq 0$, then it is a real eigenvector of $A$. When $\Re(v)=0$, then $\Im(v)$ is a real eigenvector.
\end{proof}

\newpage
\addcontentsline{toc}{section}{\bf Bibliography}
\bibliographystyle{apsrev}
\bibliography{mag_PCbib}
%\bibliography{rmp13-hugekey}
% czyta z katalogu Miktex/bibtex/rmp !!!!!
\end{document}